%% file: main.tex
\documentclass[11pt]{article}
\input{preamble}

\title{Bicriteria Approximate Navigable Graphs in General Metrics}
\author{%
  Hongyi Chen\thanks{State Key Laboratory for Novel Software Technology,
  Nanjing University, \href{mailto:hychener01@gmail.com}{\texttt{hychener01@gmail.com}}}
  \and
  Tianyi Zhang\thanks{State Key Laboratory for Novel Software Technology,
  Nanjing University, \href{mailto:tianyiz25@nju.edu.cn}{\texttt{tianyiz25@nju.edu.cn}}}}
\date{}

\begin{document}
\pagenumbering{roman}
\maketitle
\begin{abstract}
\input{sections/abstract}
\end{abstract}
\thispagestyle{empty}
\clearpage
\pagenumbering{arabic}
\pagestyle{plain}

\input{intro}
\input{alg}
\input{lbnd}
\input{sections/ai_disclosure}

\input{sections/acknowledgment}

\bibliographystyle{alpha}
\bibliography{references}
\end{document}

%% file: preamble.tex
\usepackage[letterpaper,margin=1in]{geometry}
\usepackage[T1]{fontenc}
\usepackage{amsmath,amsfonts,amssymb,amsthm,mathtools}
\usepackage{mathpazo,tgpagella}
\usepackage{graphicx,xspace}
\usepackage[algo2e,ruled,vlined,linesnumbered,commentsnumbered,noend]{algorithm2e}
\SetKwInput{KwIn}{Input}
\SetKwInput{KwOut}{Output}
\SetKw{Return}{return}
\SetKwComment{Comment}{// }{}
\DontPrintSemicolon
\usepackage[shortlabels]{enumitem}
\usepackage{booktabs,array,microtype,placeins,needspace}
\usepackage[dvipsnames]{xcolor}
\usepackage{tikz,subcaption}
\usetikzlibrary{arrows.meta,trees,shapes.geometric}
\usepackage{hyperref}
\usepackage{url}
\usepackage[capitalize,nameinlink]{cleveref}
\hypersetup{colorlinks=true,pdfpagemode=UseNone,citecolor=OliveGreen,
  linkcolor=NavyBlue,urlcolor=Magenta,pdfstartview=FitH,
  pdftitle={Bicriteria Approximate Navigable Graphs in General Metrics},
  pdfauthor={Hongyi Chen and Tianyi Zhang}}
\newtheorem{theorem}{Theorem}[section]
\newtheorem{lemma}[theorem]{Lemma}
\newtheorem{claim}[theorem]{Claim}

\theoremstyle{definition}
\newtheorem{definition}[theorem]{Definition}
\theoremstyle{remark}

\newtheorem{question}[theorem]{Question}
\AddToHook{env/theorem/begin}{\crefalias{section}{theorem}\crefalias{theorem}{theorem}}
\AddToHook{env/lemma/begin}{\crefalias{section}{lemma}\crefalias{theorem}{lemma}}
\AddToHook{env/claim/begin}{\crefalias{section}{claim}\crefalias{theorem}{claim}}
\AddToHook{env/proposition/begin}{\crefalias{section}{proposition}\crefalias{theorem}{proposition}}
\AddToHook{env/corollary/begin}{\crefalias{section}{corollary}\crefalias{theorem}{corollary}}
\AddToHook{env/definition/begin}{\crefalias{section}{definition}\crefalias{theorem}{definition}}
\AddToHook{env/remark/begin}{\crefalias{section}{remark}\crefalias{theorem}{remark}}
\AddToHook{env/question/begin}{\crefalias{section}{question}\crefalias{theorem}{question}}
\crefname{question}{Question}{Questions}
\Crefname{question}{Question}{Questions}
\crefname{algocf}{Algorithm}{Algorithms}
\Crefname{algocf}{Algorithm}{Algorithms}
\numberwithin{equation}{section}
\numberwithin{algocf}{section}
\setlist{itemsep=3pt,topsep=4pt}

\AddToHook{cmd/section/before}{\Needspace{5\baselineskip}}
\AddToHook{cmd/subsection/before}{\Needspace{4\baselineskip}}
\input{macros}

%% file: macros.tex
        {\medskip}

\newcommand{\ceil}[1]{\left\lceil #1 \right\rceil}

\newcommand{\brac}[1]{\left(#1\right)}

\newcommand{\sfd}{\mathsf{d}}

%% file: sections/abstract.tex
We study approximation algorithms for constructing sparse navigable graphs in general metrics. 
Given an $n$-point metric $(P,\sfd)$ and a parameter $\alpha\ge 1$, an $\alpha$-navigable graph is a directed graph $G=(P,E)$ such that for every ordered pair of distinct points $s,t\in P$, the vertex $s$ has an outgoing edge $(s,p)$ with
\(
    \sfd(p,t)<\sfd(s,t)/\alpha .
\) The objective is to minimize the number of edges.
The study of approximation algorithms for sparse navigable graphs in general metrics was recently initiated by Khanna, Padaki, and Waingarten (SODA~2026), and independently by Conway et al.\ (SODA~2026). These works show that the single-criterion problem faces a Set Cover barrier: an $O(\log n)$-approximation is achievable, whereas obtaining an $o(\log n)$-approximation is NP-hard. To obtain faster algorithms, Khanna, Padaki, and Waingarten (SODA~2026) introduced bicriteria approximation, in which the output graph is required to be $\alpha$-navigable, while its size is compared against that of an optimal $\beta$-navigable graph for some relaxed parameter $\beta \ge \alpha$. They gave a fast matrix-multiplication-based algorithm that computes an $(\alpha,2(1+\varepsilon)\alpha)$-bicriteria $O(\log n)$-approximation in time $\widetilde O(n^\omega \log \Delta/\varepsilon)$, where $\Delta$ denotes the aspect ratio and $\omega$ is the matrix multiplication exponent.
Our main result is a simpler near-quadratic-time algorithm with stronger bicriteria guarantees. For every constant $\alpha\ge 1$, we compute in time $\widetilde O(n^2\log\Delta)$ an $\alpha$-navigable graph that is simultaneously an $(\alpha,2\alpha+O(1))$-bicriteria $O(1)$-approximation and an $(\alpha,\alpha+O(1))$-bicriteria $O(\log n)$-approximation. We complement the algorithm with a hardness result showing that for every constant $\eta\in(0,1)$, no polynomial-time algorithm can compute an $(\alpha,2\alpha-\eta)$-bicriteria $o(\log n)$-approximation unless $\mathrm{P}=\mathrm{NP}$.

%% file: intro.tex
\section{Introduction.}
Nearest neighbor search has been a fundamental problem in computer science \cite{bentley1975multidimensional,kushilevitz1998efficient,indyk1998approximate,jegou2010product,Malkov2016EfficientAR}, with wide applications in modern AI systems \cite{johnson2019billion, lewis2020retrieval,gao2023retrieval}. In this problem, we are given a point set $P$ together with a pairwise distance function $\sfd(\cdot, \cdot)$, and the goal is to build an efficient data structure which (approximately) answers $\arg\min_{p\in P}\sfd(p,q)$ for a query point \(q\) in the underlying metric space. Locality-sensitive hashing has been a classical approach with abundant theoretical studies for this problem \cite{Andoni2015PracticalAO}, but it becomes inefficient for large high-dimensional datasets. A recent line of practical work has adopted graph-based paradigms, including HNSW \cite{Malkov2016EfficientAR}, NSG \cite{Fu2017FastAN}, and DiskANN~\cite{Subramanya2019DiskANNF}; these heuristic approaches have yielded impressive empirical performance on real-world datasets \cite{10.14778/3476249.3476255,azizi2025graph}. 
A central theoretical abstraction behind graph-based search is \textit{navigability}, which is defined as follows.
\begin{definition}[\cite{Subramanya2019DiskANNF,10.5555/3666122.3669013}]
	Given a constant parameter $\alpha\geq 1$, for any ordered pair $(s,t)\in P\times P$ with $s\ne t$ and any point $u\in P$, we say that a directed edge $(s,u)$ is an \emph{$\alpha$-navigation from $s$ to $t$} if $$\sfd(u,t) < \frac{\sfd(s,t)}{\alpha}.$$
    A directed graph $G = (P,E)$ is said to be \emph{$\alpha$-navigable} if for every ordered pair $(s,t)\in P\times P$ with $s\ne t$, the vertex $s$ has at least one outgoing edge that forms an $\alpha$-navigation toward $t$.
\end{definition}

Graph-based nearest-neighbor search methods typically proceed in two stages. In the preprocessing stage, one builds a navigable graph \(G=(P,E)\) on the dataset. At query time, many such methods perform a greedy or greedy-search-like traversal over \(G\): starting from one or several entry points, they repeatedly inspect out-neighbors of currently active vertices and try to move the search toward points closer to the query. Thus, the quality of the graph is governed by two competing requirements: the graph should contain enough edges to satisfy the navigation constraints, but it should also be sparse so that memory usage and the cost of inspecting neighbors remain small. It was shown in \cite{10.5555/3666122.3669013} that, for any $\alpha>1$, any \(\alpha\)-navigable graph can be used to construct a data structure which returns $\left(\frac{\alpha+1}{\alpha-1}+\epsilon\right) $-approximate nearest neighbors in $O\left(\log_\alpha \frac{\Delta}{(\alpha-1)\epsilon}\right)$
greedy steps, where $\Delta$ is the aspect ratio of the metric, namely the ratio of the maximum to the minimum distance between distinct points of $P$.

This motivates the following problem: given an \(n\)-point metric space \((P,\sfd)\) and a parameter \(\alpha\ge 1\), construct an \(\alpha\)-navigable graph using as few edges as possible. The authors of~\cite{diwan2024navigable} showed that every set of \(n\) distinct points admits a $1$-navigable graph with \(\widetilde O(n^{3/2})\)\footnote{$\tilde{O}$ hides poly-logarithmic factors.} edges, and gave a \(\widetilde O(n^2)\)-time construction. In a different direction, Arya and Mount~\cite{arya1993approximate} showed that, for point sets in $d$-dimensional Euclidean space, one can construct a $1$-navigable graph with maximum degree \(2^{O(d)}\). However, they are not instance-optimal: a particular input \(P\) may admit a much sparser navigable graph than the worst-case bound.

Two recent concurrent works \cite{Khanna2025SparseNG,Conway2025EfficientlyCS} have initiated the study of approximation algorithms for constructing sparse $\alpha$-navigable graphs in general metrics. In both works, the authors showed a reduction from Set Cover to $\alpha$-navigable graphs, establishing NP-hardness of $o(\log n)$-approximations of the sparsest $\alpha$-navigable graphs. On the algorithmic side, both works designed an $\tilde{O}\brac{\min\{n \cdot \mathrm{OPT}_\alpha, n^{2.5}\}}$ time algorithm which computes $O(\log n)$-approximations of the sparsest $\alpha$-navigable graphs, where $\mathrm{OPT}_\alpha$ is the number of edges in the optimal solution. Besides, for the special case of $\alpha = 1$, the authors of \cite{Conway2025EfficientlyCS} showed a $\tilde{O}(n^2)$ time algorithm with $O(\log n)$-approximation. For general values of $\alpha \geq 1$, to achieve faster runtime, the authors of \cite{Khanna2025SparseNG} have studied the notion of bicriteria approximation of $\alpha$-navigable graphs defined below.
\begin{definition}[\cite{Khanna2025SparseNG}]\label{def:bicriteria}
	Let $(P,\sfd)$ be a finite metric space and $\beta \ge \alpha \ge 1$.
	A directed graph $G = (P,E)$ is a \emph{$(\alpha,\beta)$-bicriteria $c$-approximation} to the
	sparsest $\alpha$-navigable graph if $G$ is $\alpha$-navigable and 
    $$
    |E| \le c \cdot \min_{H\ \text{is $\beta$-navigable}} |E(H)|.
    $$
\end{definition}
To unify notation, the single criterion problem of approximating $\alpha$-navigable graphs can be written as a $(\alpha, \alpha)$-bicriteria problem. In \cite{Khanna2025SparseNG}, it was shown that a $(\alpha, 2(1+\epsilon)\alpha)$-bicriteria $O(\log n)$-approximation can be computed by an algorithm in $\tilde{O}(n^\omega\log\Delta / \epsilon)$ time, where $\Delta$ is the aspect ratio of metric $(P, \sfd)$ and $\omega$ is the matrix multiplication exponent. The following question was left open in \cite{Khanna2025SparseNG}.

\begin{question}\label{question}
Is there a quadratic-time algorithm for (bicriteria)
$O(\log n)$-approximation of the sparsest
$\alpha$-navigable graph?
\end{question}

\subsection{Our Results.}
As our main algorithmic result, we answer \Cref{question} affirmatively up to polylogarithmic factors in the bicriteria setting. More precisely, we give a near-quadratic-time construction whose bicriteria guarantees improve upon the previous fastest algorithm and are essentially tight in the large-$\alpha$ regime.

\begin{theorem}\label{thm:alg}
	Let $(P, \sfd)$ be an $n$-point metric space with aspect ratio $\Delta$ and $\alpha\geq 1$ be a constant parameter. Then, there is an algorithm with runtime $\tilde{O}(n^2\log\Delta)$ which builds an $\alpha$-navigable graph $G = (P, E)$ with the following bicriteria approximation guarantees simultaneously.
	\begin{itemize}[leftmargin=*]
		\item $G$ is an $(\alpha, \beta)$-bicriteria $5$-approximation, where $\beta = 2\alpha + 8 + 4/\alpha \leq 2\alpha + 12$;
		\item $G$ is an $(\alpha, \beta')$-bicriteria $O(\ln n)$-approximation, where $\beta' = \alpha + 5 + 2/\alpha \leq \alpha + 7$.
	\end{itemize}
\end{theorem}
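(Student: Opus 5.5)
We will build $G$ scale by scale, using the geometric structure of $\alpha$-navigation to replace the Set Cover step of \cite{Khanna2025SparseNG,Conway2025EfficientlyCS} by a covering argument charged against an optimal $\beta$-navigable graph. We bucket distances into $O(\log\Delta)$ scales $r_i=2^i$. For a source $s$ and scale $i$, let $T_{s,i}=\{t:\sfd(s,t)\in[r_i,2r_i)\}$. A navigation from $s$ to $t\in T_{s,i}$ needs a neighbor $u$ with $\sfd(u,t)<\sfd(s,t)/\alpha$; in particular $u$ lies within roughly $2r_i/\alpha$ of $t$.

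The algorithm handles each $s$ and each scale $i$ greedily. We scan the uncovered targets $t\in T_{s,i}$, and whenever some $t$ is not yet $\alpha$-navigated from $s$, we add the edge $(s,t)$ itself, which trivially navigates $t$ since $\sfd(t,t)=0$. We then mark as covered all targets $t'$ with $\sfd(t,t')<\sfd(s,t')/\alpha$. The cost is $O(|T_{s,i}|)$ per added edge. To reach $\tilde O(n^2\log\Delta)$ overall, we plan to remove covered targets by scanning once per chosen center, or, failing that, to bound the number of centers per scale by a packing argument. Correctness ($\alpha$-navigability) is immediate, since every target is either covered or becomes a center.

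For the $O(1)$ bound with $\beta=2\alpha+8+4/\alpha$, we fix an optimal $\beta$-navigable $H$ and charge each center $t$ chosen for $(s,i)$ to the $H$-edge $(s,u)$ navigating $s\to t$, so $\sfd(u,t)<2r_i/\beta$. Two centers $t_1,t_2$ chosen at the same $(s,i)$ must satisfy $\sfd(t_1,t_2)\ge \sfd(s,t_2)/\alpha\ge r_i/\alpha$, because otherwise $t_2$ would already have been covered. If both charge the same $u$, then $\sfd(t_1,t_2)<4r_i/\beta$, which contradicts the separation once $\beta>4\alpha$. This is too weak, so we refine. We use the finer cover test with the exact distance $\sfd(s,t')$ rather than the scale, and we choose centers in decreasing order of $\sfd(s,t)$. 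The triangle inequalities then give the sharper relation $\sfd(t_1,t_2)\le \sfd(u,t_1)+\sfd(u,t_2)$, compared against $\sfd(s,t_2)/\alpha$. Tuning the constants should yield the $2\alpha+O(1)$ threshold. A single $H$-edge $(s,u)$ can be charged across several scales, but only $O(1)$ of them: distances $\sfd(s,t)$ with $\sfd(u,t)\ll\sfd(s,t)$ force $\sfd(s,t)\approx\sfd(s,u)$. This overlap is where the constant $5$ comes from.

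For the $O(\ln n)$ bound with $\beta'\approx\alpha+O(1)$, we instead pick centers at random, or by a greedy set-cover rule within each scale. The analysis compares against a solution in which each $H$-edge $(s,u)$, where $H$ is $\beta'$-navigable, induces a candidate set around $u$ of radius $2r_i/\beta'$. Every target in that ball is covered by one of our center-based sets whenever $4r_i/\beta'+(\text{slack})<r_i/\alpha$. Running a greedy set cover on the center sets then gives a $\ln n$ factor, and the required slack yields $\beta'=\alpha+5+2/\alpha$. The main obstacle is making both analyses hold for the same output graph within the runtime. We plan to run the $O(\ln n)$ greedy restricted to candidate centers in $T_{s,i}$ and to show that it also enjoys the packing property used in the $O(1)$ argument, since a greedy never picks a center already covered.
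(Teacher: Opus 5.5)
Your proposal has a genuine gap in the running time, which is the main point of the theorem. Your construction is a per-source greedy: for each $s$ and scale, every chosen center $t$ triggers a scan of the remaining targets. The work for $s$ is therefore about $|T_{s,i}|$ times the number of centers chosen for $s$, and neither of your fallbacks helps in a general metric. In the equilateral metric (all distances equal), the only point within distance $<1/\alpha\le 1$ of $t$ is $t$ itself, so every source selects all $n-1$ other points at a single scale. Scanning only the still-uncovered targets still costs $\sum_k (n-1-k)=\Theta(n^2)$ per source, hence $\Theta(n^3)$ overall. No packing argument can bound the number of centers per scale, since there is no doubling structure. This is exactly the ordered-greedy warm-up the paper sets aside as too slow; with nondecreasing order it is even an $(\alpha,2\alpha)$-bicriteria $1$-approximation, so approximation quality is not what fails.

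The missing idea is to make the covers \emph{independent of the source}. For each scale $r_i=(1+1/\alpha)^i$, compute once a partition of all of $P$ into clusters contained in balls of radius $\rho_i=r_i/(\alpha+2)$; this costs $\tilde O(n^2)$ per scale with coverage counters. Then let every $s$ link to all centers $p$ with $r_i-\rho_i<\sfd(s,p)\le r_{i+1}+\rho_i$, which costs $O(n)$ per source and scale. Navigability then needs no per-target test. If $t$ lies in the cluster of $p$ and $r_i<\sfd(s,t)\le r_{i+1}$, then $p$ lies in the annulus and $\sfd(p,t)\le\rho_i<\sfd(s,p)/(\alpha+1)$, which already forces $\sfd(p,t)<\sfd(s,t)/\alpha$.

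The approximation sketches also need repair.

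For the constant-factor bound, process targets in nondecreasing, not decreasing, order of $\sfd(s,t)$. If an earlier center $t_1$ and a later center $t_2$ are charged to the same optimal edge $(s,u)$, then $\sfd(t_1,t_2)<(\sfd(s,t_1)+\sfd(s,t_2))/\beta\le 2\sfd(s,t_2)/\beta$. This contradicts $\sfd(t_1,t_2)\ge\sfd(s,t_2)/\alpha$ once $\beta\ge 2\alpha$, whereas decreasing order at dyadic scales only reaches about $3\alpha$. In the shared-cover version, the separation comes from the cover itself: centers are pairwise more than $\rho_i$ apart. Injectivity per scale, together with at most $5$ scales per optimal edge, gives the factor $5$ at $\beta=2(1+\delta)(\alpha+2)+2$ with $\delta=1/\alpha$. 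The scale ratio multiplies $\alpha+2$ here, which is why $2^i$ scales are too coarse.

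For the logarithmic bound, your condition $4r_i/\beta'+(\text{slack})<r_i/\alpha$ already forces $\beta'>4\alpha$, so it cannot give $\alpha+5+2/\alpha$. The paper avoids any radius blow-up. The clusters whose centers lie in $s$'s annulus form a local set-cover instance whose sets are radius-$\rho_i$ balls, the same radius the algorithm uses. Each optimal out-edge $(s,x)$ then supplies the legal set $B(x,\rho_i)$ once $\beta'=(1+\delta)(\alpha+2)+2$. The $\ln n$ factor comes from the weights $y_v=1/|C_j|$ of the single global greedy: every radius-$\rho_i$ ball receives weight at most $H_n$, and each optimal edge is counted at no more than $8$ scales.

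This weight bound requires the greedy to be allowed to pick already-covered centers, since the comparison ball $B(c,\rho_i)$ may have a covered center $c$. So you should not expect one greedy to also have the separation property. You do not need it: run both constructions and output the graph with fewer edges, which inherits both guarantees.
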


The first guarantee shows that the Set Cover barrier can be bypassed by allowing an almost factor-$2$ relaxation in the bicriteria parameter: instead of an $O(\log n)$-approximation, we obtain
a constant-factor approximation against the sparsest $(2\alpha + O(1))$-navigable graph. The second guarantee gives an $O(\log n)$-approximation with a smaller bicriteria loss
than the previous fast bicriteria algorithm of \cite{Khanna2025SparseNG}; in particular, for large constant $\alpha$, the leading coefficient improves from roughly $2\alpha$ to roughly $\alpha$. Another main contribution is the running time: both guarantees are achieved in \(\widetilde O(n^2\log\Delta)\) time, improving over the \(\widetilde O(n^\omega\log\Delta/\epsilon)\)-time fast bicriteria algorithm of \cite{Khanna2025SparseNG}. Since the input metric is specified by \(\Theta(n^2)\) pairwise distances, this running time is optimal up to logarithmic factors if pairwise distances are given explicitly.

\begin{table}[!htbp]
\centering
\small
\caption{Upper and lower bounds for sparse navigable graphs in general metrics.
Here \(\mathrm{OPT}_{\alpha}\) denotes the minimum number of edges in an \(\alpha\)-navigable graph, \(\Delta\) is the aspect ratio, and \(\omega\) is the matrix multiplication exponent. We use the bicriteria notation of \Cref{def:bicriteria}.}
\label{tab:related-results}
\vspace{3pt}
\begin{tabular*}{\textwidth}{@{\extracolsep{\fill}}llll@{}}
\toprule
Guarantee & Approx. & Time & Reference \\
\midrule
\((1, 1) \)-bicriteria
& \(O(\log n)\)
& \(\widetilde O(n^2)\)
& \cite{Conway2025EfficientlyCS} \\
\((\alpha,\alpha) \)-bicriteria
& \(O(\log n)\)
& \(\widetilde O\brac{\min\{n\cdot \mathrm{OPT}_{\alpha}, n^{2.5}\}}\)
& \cite{Conway2025EfficientlyCS,Khanna2025SparseNG} \\
\((\alpha,\alpha) \)-bicriteria
& \(\Omega(\log n)\)
& Polynomial time
& \cite{Khanna2025SparseNG} \\
\((\alpha,2(1+\epsilon)\alpha)\)-bicriteria
& \(O(\log n)\)
& \(\widetilde O(n^{\omega}\log\Delta/\epsilon)\)
& \cite{Khanna2025SparseNG} \\

\((\alpha,\alpha+O(1))\)-bicriteria
& \(O(\log n)\)
& \(\widetilde O(n^2\log\Delta)\)
& This work \\

\((\alpha,2\alpha+O(1))\)-bicriteria
& \(O(1)\)
& \(\widetilde O(n^2\log\Delta)\)
& This work \\
\((\alpha, 2\alpha-\eta)\)-bicriteria    &   \(\Omega(\log n)\)  &   Polynomial time &   This work\\
\bottomrule
\end{tabular*}
\end{table}

Our second result shows that the factor $2$ appearing above is essentially unavoidable. Below $2\alpha$, the Set Cover barrier persists.
\begin{theorem}\label{thm:reduction}
	For every constant $\alpha \ge 1$ and every constant $\eta\in(0,1)$, there exists a constant $c_\eta>0$ such that it is NP-hard to compute an $(\alpha, 2\alpha - \eta)$-bicriteria $c_\eta\ln n$-approximation of the sparsest $\alpha$-navigable graph on a metric $(P, \sfd)$ with $n$ points.
\end{theorem}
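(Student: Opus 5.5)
We reduce from Set Cover, which is NP-hard to approximate within $c\ln N$ for some constant $c>0$ (Feige; Dinur--Steurer), even on instances with $m\le \mathrm{poly}(N)$ sets over a universe of $N$ elements. Given an instance $(U,\mathcal S)$, we will build a metric $(P,\sfd)$ with $n=\mathrm{poly}(N)$ points. The metric will have two properties. First, every $\alpha$-navigable graph on it essentially encodes a set cover through the out-edges of a family of ``source'' vertices. Second, from a cover of size $k$ we can build a $(2\alpha-\eta)$-navigable graph, hence in particular an $\alpha$-navigable one, with about $k\cdot T$ edges plus lower-order terms.

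\textbf{The gadget.} The construction follows the Set Cover reduction of \cite{Khanna2025SparseNG,Conway2025EfficientlyCS}. We take $T$ copies of source points $s_1,\dots,s_T$, one set-point $a_S$ for each $S\in\mathcal S$, and one element-point $b_e$ for each $e\in U$. The intended rule is that an edge $(s_i,a_S)$ is an $\alpha$-navigation toward $b_e$ exactly when $e\in S$, and that no other out-neighbor of $s_i$ is useful toward $b_e$ except $b_e$ itself. The new ingredient needed to get up to $2\alpha-\eta$ is to calibrate the distances near the threshold $2\alpha$. We set $\sfd(s_i,b_e)=D$ and $\sfd(s_i,a_S)=D/2+\delta$. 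We set $\sfd(a_S,b_e)=D/2-\delta'$ if $e\in S$, and $D/2+\delta''$ otherwise, with small slack parameters.

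By the triangle inequality, $\sfd(a_S,b_e)\ge D-\sfd(s_i,a_S)$, so the ratio $\sfd(s,t)/\sfd(u,t)$ can never much exceed $2$ for an intermediate $u$ at roughly half distance. This is the source of the factor $2$. To obtain ratio $\approx 2\alpha$ rather than $\approx 2$, we combine this with a scaling layer. Each $b_e$ is replaced by a tight cluster, and the targets are placed at distance $\approx D/\alpha$ beyond it. The navigation condition $\sfd(u,t)<\sfd(s,t)/\alpha$ then interacts with the half-distance geometry. The goal is that the useful edges $s_i\to a_S$ (with $e\in S$) satisfy the stronger $(2\alpha-\eta)$-navigation, while the non-useful ones fail even $\alpha$-navigation. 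The slacks will be chosen as functions of $\eta$.

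\textbf{The two directions.} (i) \emph{Soundness.} In any $\alpha$-navigable $G$, each $s_i$ must, for every $e$, have an out-edge either to $b_e$'s cluster or to some $a_S$ with $e\in S$. Therefore $\deg^+(s_i)\ge \mathrm{OPT}_{SC}$, because replacing each direct edge to $b_e$ by any set containing $e$ yields a cover. Summing over $i$ gives $|E(G)|\ge T\cdot\mathrm{OPT}_{SC}$. (ii) \emph{Completeness.} Given a cover $\mathcal C$, we build a $(2\alpha-\eta)$-navigable graph $H$. In $H$, each $s_i$ points to $\{a_S:S\in\mathcal C\}$, and all remaining pairs (among set-points, element-points, clusters, and sources to sources) are handled by a fixed graph of size $\mathrm{poly}(N+m)$ independent of $T$. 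For example, the sources can form a navigable structure with $O(T\log T)$ or $O(T)$ edges via a path or star arrangement. Choosing $T=\mathrm{poly}(N,m)$ large makes this overhead negligible, so $\mathrm{OPT}_{2\alpha-\eta}\le T|\mathcal C|(1+o(1))$.

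\textbf{Conclusion.} Combining (i) and (ii), an $(\alpha,2\alpha-\eta)$-bicriteria $c_\eta\ln n$-approximation would give a Set Cover approximation of ratio $c_\eta\ln n\,(1+o(1))\le c\ln N$, since $\ln n=O(\ln N)$. Choosing $c_\eta$ small enough then yields NP-hardness.

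\textbf{Main obstacle.} The hard step is designing a genuine metric that realizes the dichotomy with the gap exactly between $\alpha$ and $2\alpha-\eta$. We must ensure three things. First, the triangle inequality holds, which forces all cross-distances into the window $[D/2-O(\delta),D/2+O(\delta)]$. Second, no unintended shortcut, through another source, another $a_S$, or points of the scaling layer, gives an $\alpha$-navigation from $s_i$ toward $b_e$. Third, the auxiliary pairs can still be served cheaply at level $2\alpha-\eta$. We would first fix a two-scale construction in which all sources sit far apart at a common distance and verify the case analysis, then tune $\delta,\delta',\delta''$ in terms of $\eta$.
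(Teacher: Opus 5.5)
\textbf{There is a genuine gap: the metric is never specified, and the concrete geometry you give cannot work for any $\alpha\ge1$.} With $\sfd(s_i,b_e)=D$, $\sfd(s_i,a_S)\approx D/2$ and $\sfd(a_S,b_e)=D/2\pm O(\delta)$, every set-point has ratio $\sfd(s_i,b_e)/\sfd(a_S,b_e)=2\pm O(\delta/D)$, whether useful or not. You need ratio $>2\alpha-\eta$ when $e\in S$ and ratio $\le\alpha$ when $e\notin S$. These thresholds differ by the factor $2-\eta/\alpha>1$, so small slacks cannot separate them. For $\alpha<3/2$ the useless set-points are already $\alpha$-navigations. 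For $\alpha\ge 3/2$ the useful ones are not $(2\alpha-\eta)$-navigations, since $2\alpha-\eta>2$. The ``scaling layer'' meant to repair this is never defined, so the core of the reduction is missing.

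The missing idea is that the factor $2$ should come from a discrete $1$-versus-$2$ gap, with the source distance supplying the factor $\alpha$; it does not come from a half-distance triangle inequality. The paper takes the shortest-path metric of the unit-weight graph formed by the set--element incidences plus a hub $c_j$ adjacent to every set. It then puts every outside source at distance exactly $2\alpha$ from every gadget point. An $\alpha$-navigation toward an element $e$ needs a point at distance $<2$ from $e$, and the only such points are $e$ and the sets containing $e$. A $(2\alpha-\eta)$-navigation needs only distance $<2\alpha/(2\alpha-\eta)$, which every point at distance $1$ satisfies. So the cover sets serve all elements and the hub serves all set-points.

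\textbf{Your completeness step also has an unresolved overhead problem.} You plan for the $T$ sources to sit far apart at a common distance. Unless some other point lies within $\sfd(s_i,s_j)/\alpha$ of $s_j$, every $\alpha$-navigable graph must then contain all $T(T-1)$ source--source edges, which swamps the $T|\mathcal C|$ term. A star centre cannot help once $\alpha\ge2$: a centre at distance $r$ from all sources forces $\sfd(s_i,s_j)\le 2r$, while navigating through it needs $r<\sfd(s_i,s_j)/\alpha$. A path arrangement conflicts with the equidistant sources your case analysis assumes, and you have not reconciled the two.

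The paper avoids any separate source structure. It takes $K=\lceil 2(m+n+1)/\eta\rceil$ disjoint copies of the whole gadget, with all cross-copy distances equal to $2\alpha$. Every point of copy $i$ is then a source for every copy $j\ne i$, served by $|C^\star|+1$ edges: one to $c_j$ plus edges to the copied optimal cover. The only other cost is the intra-copy complete graphs, $K(m+n+1)(m+n)$ edges. After dividing by the $K(K-1)(m+n+1)$ (source, copy) pairs this adds at most $\eta$ per pair. The $+1$ for the hub is absorbed by brute-forcing instances with $|C^\star|<20$ beforehand. A cover is extracted from the (source, copy) pair with the fewest out-neighbours in that copy, replacing element out-neighbours by arbitrary sets containing them.
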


Together with \Cref{thm:alg}, \Cref{thm:reduction} shows that our two upper-bound guarantees are nearly optimal in complementary ways. First, the lower bound shows that the Set Cover barrier persists for every bicriteria parameter strictly below \(2\alpha\): even if the algorithm is allowed to compare against the sparsest \((2\alpha-\eta)\)-navigable graph, an \(o(\log n)\)-approximation remains NP-hard. Therefore, in the entire sub-\(2\alpha\) bicriteria regime, the best possible approximation factor is \(\Theta(\log n)\); our \((\alpha,\alpha+O(1))\)-bicriteria \(O(\log n)\)-approximation matches this approximation factor while using bicriteria loss \(\alpha+O(1)=(1+o_\alpha(1))\alpha\), which is asymptotically the smallest possible since every bicriteria comparison must have \(\beta\ge\alpha\). Second, the same lower bound shows that any constant-factor approximation must cross the \(2\alpha\) threshold, while our \((\alpha,2\alpha+O(1))\)-bicriteria \(O(1)\)-approximation does so by only an additive constant. Thus, for large \(\alpha\), our results identify \(2\alpha\) as the sharp threshold for beating the logarithmic Set Cover barrier, and achieve essentially the smallest bicriteria loss possible on the logarithmic-approximation side of this threshold.

We emphasize that our main focus is the regime where \(\alpha\) is large. This is precisely the regime in which the standard theoretical guarantee for greedy search becomes meaningful for high-accuracy nearest-neighbor search: on an \(\alpha\)-navigable graph, greedy search achieves approximation ratio
\(
    \frac{\alpha+1}{\alpha-1}+\varepsilon.
\) Since \(\frac{\alpha+1}{\alpha-1}=1+\frac{2}{\alpha-1}\), obtaining a \((1+\epsilon)\)-approximate nearest neighbor through this analysis requires \(\alpha=\Theta(1/\epsilon)\), after choosing the additive accuracy parameter \(\varepsilon\) appropriately. In this sense, our construction gives a nearly tight bicriteria approximation in the theoretically relevant high-accuracy regime.

We note that in practice, the values of $\alpha$ used or observed in proximity graphs are often smaller than $2$ \cite{Subramanya2019DiskANNF}. However, the nearest neighbor approximation guarantee of greedy search on $\alpha$-navigable graphs for small choices of $\alpha$ is still poorly understood, and this is beyond the scope of the present work.

\subsection{Technical Overview.}
We give an overview of the main ideas behind our upper bound. We start with the Set Cover viewpoint, which explains the \(O(\log n)\) barrier in previous work. We then describe a simple but slower ordered-greedy procedure, which exposes the packing principle behind the \(2\alpha\) threshold. Our algorithm can be viewed as an acceleration of the ordered-greedy procedure, achieved by replacing the source-specific greedy choices with globally shared representatives.

\paragraph{The Set Cover viewpoint.}
For a fixed source point \(s\), the \(\alpha\)-navigability constraints can be written as a Set Cover instance.
The universe is \(P\setminus\{s\}\), and each candidate out-neighbor \(u\neq s\) corresponds to the set
\[
Z_\alpha(s,u) = \left\{ t\in P\setminus\{s\} \;\middle|\; \sfd(u,t)<\frac{\sfd(s,t)}{\alpha} \right\}.
\]
Choosing out-neighborhoods for \(s\) is therefore exactly the problem of covering all targets \(t\in P\setminus\{s\}\) by sets \(Z_\alpha(s,u)\).
Constructing a sparsest \(\alpha\)-navigable graph amounts to solving one such Set Cover instance for every source \(s\).

This formulation immediately yields \(O(\log n)\)-approximation algorithms via greedy Set Cover, and it also explains why the problem inherits the classical Set Cover hardness barrier.
Previous algorithms~\cite{Khanna2025SparseNG,Conway2025EfficientlyCS} for sparse navigable graphs in general metrics are based on this viewpoint.
However, if one wants to beat the \(O(\log n)\) approximation factor, the Set Cover formulation by itself does not reveal enough additional structure.

\paragraph{Warm-up: source-wise ordered greedy.}
Fix a source \(s\) and consider the following simple greedy procedure.
Process all targets \(t\in P\setminus\{s\}\) in nondecreasing order of \(\sfd(s,t)\).
Whenever no previously added outgoing edge of $s$ is an $\alpha$-navigation from $s$ to the current target $t$, add the edge $(s,t)$. This procedure clearly produces an \(\alpha\)-navigable graph, since any uncovered target can always be covered by selecting itself.
Let $ p_1,p_2,\ldots,p_k $ be the out-neighbors selected for source \(s\), in nondecreasing order of their distance from \(s\). When $p_j$ is selected, none of the previously added edges $(s,p_i)$, $i<j$, is an $\alpha$-navigation from $s$ to $p_j$. Therefore, \begin{equation}\label{eq:ordered-greedy-separation} \sfd(p_i,p_j) \ge \frac{\sfd(s,p_j)}{\alpha}\qquad\text{for every } i<j. \end{equation}

Now compare this ordered-greedy solution with any \(2\alpha\)-navigable graph \(H\). For every selected point \(p_j\), the graph \(H\) contains an out-neighbor \(u_j\) of \(s\) such that
\[ \sfd(u_j,p_j) < \frac{\sfd(s,p_j)}{2\alpha}. \]
Choose one such out-neighbor $u_j$ for each selected point $p_j$.
We show that the map $p_j\mapsto u_j$ is injective, which bounds
the number of selected points by the out-degree of $s$ in $H$.
Indeed, suppose two selected points $p_i,p_j$ with $i<j$
were mapped to the same out-neighbor $u$.
Then by triangle inequality,
\[
\sfd(p_i,p_j) \le \sfd(p_i,u)+\sfd(u,p_j) <
\frac{\sfd(s,p_i)}{2\alpha} + \frac{\sfd(s,p_j)}{2\alpha}
\le \frac{\sfd(s,p_j)}{\alpha},
\]
where the last inequality uses \(\sfd(s,p_i)\le \sfd(s,p_j)\).
This contradicts \eqref{eq:ordered-greedy-separation}.
Hence, for this fixed source \(s\), the ordered-greedy procedure selects no more out-neighbors than the out-degree of \(s\) in any \(2\alpha\)-navigable graph.

However, this ordered-greedy procedure is too slow for our purposes. Running it independently for all sources requires repeatedly testing whether each target is already navigated by one of the previously selected out-neighbors. A direct implementation takes \(O(n^3)\) time in the worst case.

\paragraph{Global acceleration via hierarchical covers.}
Our starting point is the classical use of nets and hierarchical covers, which has been widely adopted in low-dimensional Euclidean and doubling metrics~\cite{Krauthgamer2004NavigatingNS,Lu2025ProximityGF}. Instead of running greedy independently for every source, we precompute globally shared representatives at multiple distance scales. More concretely, let $r_i=\left(1+\frac1\alpha\right)^i$ range over the relevant distance scales. At each scale $r_i$, we use balls of radius $\Theta(r_i/\alpha)$ and, for each source $s$, add edges to selected centers lying in an appropriate annulus around $s$. The annulus is chosen so that if a target $t$ satisfies $\sfd(s,t)\approx r_i$, and $p$ is the center covering $t$ at scale $i$, then $p$ is included as an out-neighbor of $s$ and
\[
\sfd(p,t) < \frac{\sfd(s,t)}{\alpha}.
\]
Hence $(s,p)$ is an $\alpha$-navigation from $s$ to $t$, and taking all sources and scales gives an $\alpha$-navigable graph.

The two approximation guarantees use two closely related but different choices of global representatives. For the constant-factor guarantee, we use separated covers: at each scale, the selected centers are pairwise farther than the ball radius. This separation is a coarse, scale-by-scale analogue of the ordered-greedy packing condition above. For each fixed source $s$, this separation allows us to map
the selected centers at each scale to outgoing edges of $s$
in a sparsest $(2\alpha+O(1))$-navigable graph. The map is
injective at each scale, and each edge is used by these maps
at only constantly many scales.

For the logarithmic guarantee, we instead use the standard
greedy Set Cover algorithm over all radius-$\Theta(r_i/\alpha)$
balls, allowing the chosen center to be a point that has
already been covered. This variant does not have the separation
property used in the constant-factor analysis. For each source
and scale, the clusters whose centers lie in the annulus define
a local set-cover instance. We show that the number of selected
centers is at most $O(\log n)$ times the optimum of this local
instance. We then bound each local optimum by the number of
outgoing edges of the source in a sparsest
$(\alpha+O(1))$-navigable graph that navigate to points in the
local instance. Each such edge is counted at only constantly
many scales.

Finally, to obtain the simultaneous statement of \Cref{thm:alg}, we run both constructions and return the graph with fewer edges. Since both candidate graphs are $\alpha$-navigable, the returned graph is $\alpha$-navigable; because its size is no larger than either candidate, it inherits both bicriteria bounds.

%% file: alg.tex
\section{Bicriteria Approximation in Near-Quadratic Time.}\label{sec:n2-bicriteria}

\subsection{Algorithm Description.}
Let $(P,\sfd)$ be a finite metric space on $n$ points and let
$\alpha \ge 1$ be a fixed parameter. By rescaling all
distances by a common positive factor, we assume that the minimum
nonzero pairwise distance is $1+1/\alpha$. Consequently, all nonzero
pairwise distances lie in the range
$ \left[1+\frac{1}{\alpha}, \left(1+\frac{1}{\alpha}\right)\Delta\right]$, 
where $\Delta$ is the aspect ratio of the metric. For any $s\in P$ and radius parameter $r>0$, define
\[
    B(s,r)=\{u\in P\mid \sfd(s,u)\le r\}.
\]
Let $\delta=1/\alpha$, let $L=\ceil{\log_{1+\delta}\Delta}$, and define $r_i=(1+\delta)^i$ for every $0\le i\le L+1$.
Notice that $r_0=1$ and $r_{L+1}\ge (1+\delta)\Delta$. Hence, for every pair of distinct points $s,t\in P$, there exists an index $0\le i\le L$ such that $ r_i<\sfd(s,t)\le r_{i+1}.$
We use the following annulus around a source point $s$ at level $i$:
\[
A_i(s)=\left\{v\in P\;:\; r_i-\frac{r_i}{\alpha+2}<\sfd(s,v)\le r_{i+1}+\frac{r_i}{\alpha+2}\right\}.
\]

We use two different ball-cover routines. The first one is a separated cover and is used for the constant-factor bicriteria approximation. The second one is the usual greedy Set Cover algorithm over metric balls and is used for the logarithmic bicriteria approximation.

\paragraph*{Separated ball cover.}
Given a radius $r>0$, the following procedure partitions $P$ into clusters $C$ with centers $p$ satisfying $C\subseteq B(p,r)$. The centers are pairwise separated by distance strictly larger than $r$.

\begin{algorithm2e}[!htb]
\caption{SeparatedBallCover($P,\sfd,r$)}
\label{alg:separated-ball-cover}
\KwIn{metric $(P,\sfd)$, radius $r>0$}
\KwOut{a collection $\mathcal{N}$ of pairs $(C,p)$, where the sets $C$ partition $P$ and $C\subseteq B(p,r)$}
$\mathcal{N}\gets \emptyset$, $U\gets P$\;
\While{$U\neq\varnothing$}{
    Choose an arbitrary point $p\in U$\;
    $C\gets U\cap B(p,r)$, $U\gets U\setminus C$\;
    Add $(C,p)$ to $\mathcal{N}$\;
}
\Return{$\mathcal{N}$}\;
\end{algorithm2e}

For every pair $(C,p)$ returned by \Cref{alg:separated-ball-cover}, we have $C\subseteq B(p,r)$. Moreover, if $p$ is chosen before another center $p'$, then $p'$ is not covered by $B(p,r)$, and hence $\sfd(p,p')>r$. Thus all centers are pairwise separated by distance more than $r$.

\paragraph*{Greedy ball cover.}
For the logarithmic approximation, we instead run the standard greedy Set Cover algorithm where the universe is $P$ and the available sets are all radius-$r$ balls centered at points of $P$. Importantly, the center chosen at a step is allowed to be a point that has already been covered.

\begin{algorithm2e}[!htb]
\caption{GreedyBallCover($P,\sfd,r$)}
\label{alg:greedy-ball-cover}
\KwIn{metric $(P,\sfd)$, radius $r>0$}
\KwOut{a collection $\mathcal{G}$ of pairs $(C,p)$, where the sets $C$ partition $P$ and $C\subseteq B(p,r)$}
$\mathcal{G}\gets \emptyset$, $U\gets P$\;
\While{$U\neq\varnothing$}{
    Choose $p\in P$ maximizing $|U\cap B(p,r)|$\;
    $C\gets U\cap B(p,r)$, $U\gets U\setminus C$\;
    Add $(C,p)$ to $\mathcal{G}$\;
}
\Return{$\mathcal{G}$}\;
\end{algorithm2e}

The sets $C$ returned by \Cref{alg:greedy-ball-cover} form a partition of $P$, and each such set is contained in the radius-$r$ ball around its chosen center. Unlike \Cref{alg:separated-ball-cover}, the center $p$ need not belong to the currently uncovered set $U$ and need not belong to the cluster $C$.

\paragraph*{The constant-approximation construction.}
At every level $i$, we compute a separated ball cover with radius $r_i/(\alpha+2)$. For each source $s$, we add edges to the centers that lie in the annulus $A_i(s)$.

\begin{algorithm2e}[!htb]
\caption{ConstantApproxNavigableGraph($P,\sfd,\alpha$)}
\label{alg:constant-navigable-graph}
\KwIn{metric $(P,\sfd)$, parameter $\alpha\ge 1$}
\KwOut{navigable graph $G=(P,E)$}
$\delta\gets 1/\alpha$, $L\gets \ceil{\log_{1+\delta}\Delta}$\;
Define $r_i=(1+\delta)^i$ for every $0\le i\le L+1$\;
\ForEach{$0\le i\le L$}{
    $\rho_i\gets r_i/(\alpha+2)$\;
    $\mathcal{N}_i\gets \textsc{SeparatedBallCover}(P,\sfd,\rho_i)$\;
}
\ForEach{$s\in P$}{
    $E_s\gets\emptyset$\;
    \ForEach{$0\le i\le L$}{
        Compute $A_i(s)\gets\left\{v\in P: r_i-\rho_i<\sfd(s,v)\le r_{i+1}+\rho_i\right\}$\;
        $E_s^i\gets\{(s,p)\mid (C,p)\in\mathcal{N}_i,\ p\in A_i(s)\}$\;
        $E_s\gets E_s\cup E_s^i$\;
    }
}
\Return{$(P,\bigcup_{s\in P}E_s)$}\;
\end{algorithm2e}

\paragraph*{The logarithmic-approximation construction.}
The logarithmic construction is identical except that the level covers are produced by the greedy Set Cover routine.

\begin{algorithm2e}[!htb]
\caption{LogApproxNavigableGraph($P,\sfd,\alpha$)}
\label{alg:log-navigable-graph}
\KwIn{metric $(P,\sfd)$, parameter $\alpha\ge 1$}
\KwOut{navigable graph $G=(P,E)$}
$\delta\gets 1/\alpha$, $L\gets \ceil{\log_{1+\delta}\Delta}$\;
Define $r_i=(1+\delta)^i$ for every $0\le i\le L+1$\;
\ForEach{$0\le i\le L$}{
    $\rho_i\gets r_i/(\alpha+2)$\;
    $\mathcal{G}_i\gets \textsc{GreedyBallCover}(P,\sfd,\rho_i)$\;
}
\ForEach{$s\in P$}{
    $E_s\gets\emptyset$\;
    \ForEach{$0\le i\le L$}{
        Compute $A_i(s)\gets\left\{v\in P: r_i-\rho_i<\sfd(s,v)\le r_{i+1}+\rho_i\right\}$\;
        $E_s^i\gets\{(s,p)\mid (C,p)\in\mathcal{G}_i,\ p\in A_i(s)\}$\;
        $E_s\gets E_s\cup E_s^i$\;
    }
}
\Return{$(P,\bigcup_{s\in P}E_s)$}\;
\end{algorithm2e}

\subsection{Running Time.}
Both ball-cover routines can be implemented in $\tilde{O}(n^2)$ time for a fixed radius. First compute all radius-$r$ balls. During the covering process, maintain the number of uncovered points in each ball. Whenever a cluster $C$ is removed from the uncovered set, update the counters of all balls containing each point of $C$; this costs $\tilde{O}(n|C|)$ time for the iteration and $\tilde{O}(n^2)$ time over all iterations. A priority queue with lazy updates gives the required maximum-coverage choice in \Cref{alg:greedy-ball-cover}.

There are $L+1=O(\alpha\log\Delta)$ levels. For either graph construction, computing all covers therefore takes $\tilde{O}(\alpha n^2\log\Delta)$ time. Once the covers are computed, all edge sets can be constructed by checking, for every level $i$, every source $s$, and every selected center $p$, whether $p\in A_i(s)$. This takes $O(n^2L)$ time and also bounds the output size. Thus both \Cref{alg:constant-navigable-graph} and \Cref{alg:log-navigable-graph} run in $\tilde{O}(\alpha n^2\log\Delta)$ time, which is $\tilde{O}(n^2\log\Delta)$ for constant $\alpha$.

\subsection{Proof of Correctness.}

\begin{lemma}[Ball Coverage]\label{lem:shortcut-ball}
Let $(P,\sfd)$ be a metric space and let $\alpha\ge 1$. For any $u,c,v\in P$, if
\[
    \sfd(c,v)<\frac{\sfd(c,u)}{\alpha+1},
\]
then the directed edge $(u,c)$ is an $\alpha$-navigation from $u$ to $v$, i.e.,
\[
    \sfd(c,v)<\frac{\sfd(u,v)}{\alpha}.
\]
\end{lemma}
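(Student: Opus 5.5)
The plan is a single application of the triangle inequality, run in the direction that bounds $\sfd(u,v)$ from below in terms of $\sfd(c,u)$. Write $x=\sfd(c,v)$ and $D=\sfd(c,u)$, so the hypothesis reads $x<D/(\alpha+1)$. Our goal is $x<\sfd(u,v)/\alpha$, which is equivalent to $\alpha x<\sfd(u,v)$.

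First, the triangle inequality applied to the triple $c,v,u$ gives $D=\sfd(c,u)\le \sfd(c,v)+\sfd(v,u)$. Rearranging yields the lower bound
\[
\sfd(u,v)\ \ge\ D-x .
\]
Second, the hypothesis can be rewritten as $(\alpha+1)x<D$, that is, $\alpha x<D-x$. Chaining the two facts gives
\[
\alpha x<D-x\le \sfd(u,v),
\]
and dividing by $\alpha\ge 1>0$ gives $\sfd(c,v)<\sfd(u,v)/\alpha$. This is exactly the condition that $(u,c)$ is an $\alpha$-navigation from $u$ to $v$.

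There is essentially no obstacle here. The only step requiring care is bookkeeping around degenerate cases. The hypothesis forces $D>0$, so $c\neq u$ and $(u,c)$ is a genuine edge. It also forces $u\ne v$: if $u=v$, the hypothesis would read $x<x/(\alpha+1)$, which is impossible since $x\ge 0$. Hence the ordered pair $(u,v)$ is a legitimate source–target pair in the sense of the navigability definition. The strict inequality is preserved throughout because it enters only once, through the hypothesis.
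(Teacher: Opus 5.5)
Your proof is correct and is essentially the paper's argument: the triangle inequality gives $\sfd(u,v)\ge \sfd(c,u)-\sfd(c,v)$, and the hypothesis $(\alpha+1)\sfd(c,v)<\sfd(c,u)$ is exactly what is needed to conclude $\alpha\,\sfd(c,v)<\sfd(u,v)$. Your extra checks that $u\neq v$ and $c\neq u$ are valid additions that the paper leaves implicit.
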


\begin{proof}
By triangle inequality,
\[
    \sfd(u,v)\ge \sfd(c,u)-\sfd(c,v).
\]
Thus it suffices to prove
\[
    \sfd(c,v)<\frac{\sfd(c,u)-\sfd(c,v)}{\alpha},
\]
which is equivalent to $(\alpha+1)\sfd(c,v)<\sfd(c,u)$. This is exactly the assumption.
\end{proof}

\begin{lemma}\label{lem:both-correct}
Both algorithms (\Cref{alg:constant-navigable-graph,alg:log-navigable-graph}) compute $\alpha$-navigable graphs for $(P,\sfd)$.
\end{lemma}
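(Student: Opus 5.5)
Fix distinct $s,t\in P$ and let $i$ be the unique index with $0\le i\le L$ and $r_i<\sfd(s,t)\le r_{i+1}$. Both algorithms produce, at level $i$, a partition of $P$ into clusters $C$ with centers $p$ satisfying $C\subseteq B(p,\rho_i)$, where $\rho_i=r_i/(\alpha+2)$. The argument therefore needs only this property, and it treats \Cref{alg:constant-navigable-graph,alg:log-navigable-graph} simultaneously. Let $(C,p)$ be the cluster at level $i$ containing $t$, so that $\sfd(p,t)\le\rho_i$. The plan is to show two things: first, that $p\in A_i(s)$, so the edge $(s,p)$ lies in $E_s^i$; second, that $(s,p)$ is an $\alpha$-navigation from $s$ to $t$.

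For the first step, apply the triangle inequality. We have $\sfd(s,p)\le\sfd(s,t)+\sfd(t,p)\le r_{i+1}+\rho_i$, and $\sfd(s,p)\ge\sfd(s,t)-\sfd(p,t)>r_i-\rho_i$. This is exactly the membership condition for $A_i(s)$. It also shows $\sfd(s,p)>r_i-\rho_i>0$, so $p\neq s$ and the edge is a genuine out-edge of $s$.

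For the second step, I would use the definition directly, since it is simpler than routing through \Cref{lem:shortcut-ball}. We need $\sfd(p,t)<\sfd(s,t)/\alpha$. We have $\sfd(p,t)\le r_i/(\alpha+2)$ and $\sfd(s,t)/\alpha>r_i/\alpha$, and $r_i/(\alpha+2)<r_i/\alpha$. Hence the inequality holds strictly. (Alternatively, \Cref{lem:shortcut-ball} applies with $u=s$, $c=p$, $v=t$: from $\sfd(s,p)>r_i-\rho_i=r_i(\alpha+1)/(\alpha+2)$ we get $\sfd(s,p)/(\alpha+1)>\rho_i\ge\sfd(p,t)$.) Either route shows that $(s,p)$ is an $\alpha$-navigation.

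The main point requiring care is the choice of the level $i$ together with the boundary cases. The rescaling puts every nonzero distance in $(r_0,r_{L+1}]$; in particular the minimum distance $1+1/\alpha=r_1>r_0$. This guarantees such an $i\le L$ exists, and the algorithms do iterate over that $i$. Beyond this, the only things to check are the strict versus non-strict inequalities in the annulus definition, and both come out correctly as shown above.
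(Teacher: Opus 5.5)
Your proposal is correct and follows the paper's proof. It uses the same choice of level $i$, the same triangle-inequality argument placing the covering center $p$ in $A_i(s)$, and needs only the partition-into-balls property common to both algorithms. The one difference is the final step: you verify the navigation inequality directly via $\sfd(p,t)\le r_i/(\alpha+2)<r_i/\alpha<\sfd(s,t)/\alpha$, whereas the paper goes through \Cref{lem:shortcut-ball} using the lower bound on $\sfd(s,p)$; your check is valid and slightly simpler, and your remark that $p\neq s$ is a small point the paper leaves implicit.
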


\begin{proof}
The proof only uses the following common property of the level-$i$ cover: its clusters form a partition of $P$, and each cluster $C$ with center $p$ satisfies $C\subseteq B(p,r_i/(\alpha+2))$.

Fix one of the two algorithms, a source point $s\in P$, and a target point $t\in P$ with $s\ne t$. Let $i$ be an index such that $r_i<\sfd(s,t)\le r_{i+1}$; such an index exists by the scaling assumption. Since the level-$i$ clusters cover $P$, the point $t$ lies in some cluster $C$ with center $p$, and hence
\[
    \sfd(t,p)\le \frac{r_i}{\alpha+2}.
\]
By triangle inequality,
\[
    \sfd(s,p)\le \sfd(s,t)+\sfd(t,p)\le r_{i+1}+\frac{r_i}{\alpha+2},
\]
and
\[
    \sfd(s,p)\ge \sfd(s,t)-\sfd(t,p)> r_i-\frac{r_i}{\alpha+2}.
\]
Thus $p\in A_i(s)$, so the algorithm adds the edge $(s,p)$.

Finally,
\[
    \frac{\sfd(p,s)}{\alpha+1}
    > \frac{1}{\alpha+1}\left(r_i-\frac{r_i}{\alpha+2}\right)
    = \frac{r_i}{\alpha+2}
    \ge \sfd(p,t).
\]
Thus $\sfd(p,t)<\sfd(p,s)/(\alpha+1)$. By \Cref{lem:shortcut-ball}, the edge $(s,p)$ is an $\alpha$-navigation from $s$ to $t$. Since this holds for every distinct pair $s,t\in P$, the output graph is $\alpha$-navigable.
\end{proof}

\subsection{Constant Bicriteria Approximation.}

\begin{lemma}\label{lem:constant-approx}
Let $(P,\sfd)$ be a metric space of $n$ points and let $\alpha\ge 1$. Let $G=(P,E)$ be the graph returned by \Cref{alg:constant-navigable-graph}. Then $G$ is an $(\alpha,\beta)$-bicriteria $5$-approximation, where
\[
    \beta=2(1+\delta)(\alpha+2)+2=2\alpha+8+\frac4\alpha.
\]
\end{lemma}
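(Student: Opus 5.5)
Navigability of $G$ is already given by \Cref{lem:both-correct}, so only the size bound remains. Fix an optimal $\beta$-navigable graph $H$ and a source $s$. The plan is to charge the edges $E_s=\bigcup_i E_s^i$ to out-edges of $s$ in $H$, so that each $H$-edge receives at most $5$ charges. Summing over $s$ then gives $|E|\le 5|E(H)|$.

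\textbf{Injectivity at a fixed level.} Fix a level $i$ and a center $p$ with $(s,p)\in E_s^i$, so $p\in A_i(s)$. Since $\beta$-navigability in particular navigates $s$ to $p$, there is an out-neighbor $u_p$ of $s$ in $H$ with
\[
\sfd(u_p,p)<\frac{\sfd(s,p)}{\beta}\le \frac{r_{i+1}+\rho_i}{\beta}.
\]
We map $p\mapsto u_p$ and claim this map is injective on the level-$i$ centers. Suppose two distinct centers $p,p'$ mapped to the same $u$. The triangle inequality would give $\sfd(p,p')<2(r_{i+1}+\rho_i)/\beta$. We have $r_{i+1}+\rho_i=r_i\big((1+\delta)+1/(\alpha+2)\big)$, and
\[
\beta=2(1+\delta)(\alpha+2)+2=2(\alpha+2)\Big((1+\delta)+\frac{1}{\alpha+2}\Big).
\]
Hence $2(r_{i+1}+\rho_i)/\beta=r_i/(\alpha+2)=\rho_i$. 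This contradicts the separation $\sfd(p,p')>\rho_i$ guaranteed by \Cref{alg:separated-ball-cover}. So $|E_s^i|$ is at most the number of distinct $H$-edges used at level $i$. The choice of $\beta$ is dictated exactly by this computation.

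\textbf{Bounded multiplicity across levels.} Next we show that a fixed $H$-edge $(s,u)$ is used at only $O(1)$ levels, ideally at most $5$. If $u=u_p$ for a level-$i$ center $p$, then $\sfd(s,u)$ lies within a factor $1\pm 1/\beta$ of $\sfd(s,p)$. Moreover, $\sfd(s,p)$ lies in $\big(r_i(1-\tfrac{1}{\alpha+2}),\,r_i(1+\delta+\tfrac{1}{\alpha+2})\big]$. Thus $\sfd(s,u)/r_i$ lies in an interval $I=(a,b]$ with
\[
a=\Big(1-\tfrac{1}{\alpha+2}\Big)\Big(1-\tfrac1\beta\Big),\qquad b=\Big(1+\delta+\tfrac{1}{\alpha+2}\Big)\Big(1+\tfrac1\beta\Big).
\]
The number of levels $i$ for which this can hold is at most about $\log_{1+\delta}(b/a)+1$.

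\textbf{Main obstacle.} The crux is showing $\log_{1+\delta}(b/a)<4$ uniformly in $\alpha\ge1$, which gives at most $5$ levels. In logarithms, each of the factors $1+\tfrac{1}{\alpha+2}$, $1/(1-\tfrac{1}{\alpha+2})$, $1+\tfrac1\beta$ and $1/(1-\tfrac1\beta)$ contributes less than one factor of $1+\delta=1+1/\alpha$. The term $1+\delta$ contributes exactly one factor. So $b/a<(1+\delta)^c$ for a small constant $c$, and the work is to pin down $c\le4$. For this I would bound
\[
\frac{1+\delta+\frac{1}{\alpha+2}}{1-\frac{1}{\alpha+2}}\le (1+\delta)^3
\quad\text{and}\quad
\frac{1+1/\beta}{1-1/\beta}<1+\delta,
\]
the latter using $\beta\ge 2\alpha+8$. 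I would check the worst case $\alpha=1$ explicitly, where $b/a=\tfrac{7}{3}\cdot\tfrac{15}{13}\cdot\tfrac{2}{3}\cdot\tfrac{15}{14}\cdot\ldots$ must be compared against $2^4$, and use monotonicity for larger $\alpha$. Counting integer $i$ with $r_i$ in the resulting window then gives at most $5$ levels. Combining this with level-wise injectivity gives $|E_s|\le\sum_i|E_s^i|\le 5\deg^+_H(s)$, and summing over all $s$ proves the lemma.
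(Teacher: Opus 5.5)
Your proposal is correct and follows the paper's proof essentially step for step. It uses the same level-wise injective charging $p\mapsto u_p$, which contradicts the $\rho_i$-separation of the centers because $\beta$ is chosen so that $2(r_{i+1}+\rho_i)/\beta=\rho_i$; it then bounds the cross-level multiplicity by the same ratio bound $b/a\le(1+\delta)^3\cdot(1+\delta)$. Your two displayed inequalities already hold for every $\alpha\ge1$ (the first is equivalent to $\alpha^2(\alpha^2+4\alpha+2)\le(\alpha+1)^4$), so you can drop the $\alpha=1$ check and the monotonicity step; the product you wrote there is garbled anyway.
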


\begin{proof}
Let $G^\star=(P,E^\star)$ be a sparsest $\beta$-navigable graph, and fix a source vertex $s\in P$. Write $E_s^\star=\{(s,x)\in E^\star\}$ for the out-edges of $s$ in $G^\star$.

Since $E_s=\bigcup_{i=0}^L E_s^i$, we have
$|E_s|\le\sum_{i=0}^L|E_s^i|$.
To bound this sum, we consider all pairs $(i,p)$ such that
$(s,p)\in E_s^i$; the same center $p$ is counted separately
at different levels. For each such pair, the
$\beta$-navigability of $G^\star$ guarantees an edge
$(s,x)\in E_s^\star$ such that
\[
    \sfd(x,p)<\frac{\sfd(s,p)}{\beta}.
\]
Choose one such edge and define $\phi_s(i,p)=(s,x)$.

\begin{claim}\label{claim:const-injective}
For each fixed level $i$, the map $p\mapsto\phi_s(i,p)$
is injective on $\{p\in P\mid(s,p)\in E_s^i\}$.
\end{claim}

\begin{proof}[Proof of claim]
Suppose that two distinct centers $p,p'\in A_i(s)$ at level $i$ are both mapped to the same edge $(s,x)\in E_s^\star$. By triangle inequality,
\[
    \sfd(p,p')\le \sfd(p,x)+\sfd(x,p')
    < \frac{\sfd(s,p)}{\beta}+\frac{\sfd(s,p')}{\beta}.
\]
Since $p,p'\in A_i(s)$,
\[
    \sfd(s,p),\sfd(s,p')\le r_{i+1}+\frac{r_i}{\alpha+2},
\]
and therefore
\[
    \sfd(p,p')<\frac{2\left((1+\delta)+\frac{1}{\alpha+2}\right)r_i}{\beta}.
\]
On the other hand, the centers returned by \Cref{alg:separated-ball-cover} at level $i$ are separated by distance more than $r_i/(\alpha+2)$. Hence
\[
    \frac{r_i}{\alpha+2}<\sfd(p,p')<\frac{2\left((1+\delta)+\frac{1}{\alpha+2}\right)r_i}{\beta},
\]
which implies
\[
    \beta<2(1+\delta)(\alpha+2)+2.
\]
This contradicts the definition of $\beta$.
\end{proof}

\begin{claim}\label{claim:const-overlap}
For every edge $(s,x)\in E_s^\star$, there are at most
$5$ indices $i$ for which some $p$ with $(s,p)\in E_s^i$
satisfies $\phi_s(i,p)=(s,x)$.
\end{claim}

\begin{proof}[Proof of claim]
Suppose that $p\in A_i(s)$ is a level-$i$ center and $\phi_s(i,p)=(s,x)$. By definition of $\phi_s$,
\[
    \sfd(p,x)<\frac{\sfd(s,p)}{\beta}.
\]
Using $p\in A_i(s)$ and triangle inequality,
\[
    \sfd(s,x)<\left(1+\frac1\beta\right)\left(r_{i+1}+\frac{r_i}{\alpha+2}\right),
\]
and
\[
    \sfd(s,x)>\left(1-\frac1\beta\right)\left(r_i-\frac{r_i}{\alpha+2}\right)
    =r_i\cdot\frac{\alpha+1}{\alpha+2}\left(1-\frac1\beta\right).
\]
Thus, for a fixed edge $(s,x)$, the relevant values of $r_i$ lie in an interval whose ratio is at most
\begin{align*}
R_\beta
&=\frac{\left((1+\delta)+\frac1{\alpha+2}\right)\left(1+\frac1\beta\right)}{\frac{\alpha+1}{\alpha+2}\left(1-\frac1\beta\right)} \\
&=\frac{\alpha^2+4\alpha+2}{\alpha(\alpha+1)}\cdot
  \frac{2\alpha+9+\frac4\alpha}{2\alpha+7+\frac4\alpha}\\
&\leq \brac{1 + \frac{1}{\alpha}}^3\cdot \brac{1 + \frac{1}{\alpha}}
\le \left(1+\frac1\alpha\right)^4.
\end{align*}
Since consecutive values of $r_i$ differ by a factor of $1+1/\alpha$, there are at most $1+\log_{1+1/\alpha}R_\beta\le 5$ such levels.
\end{proof}

By \Cref{claim:const-injective}, for each fixed level $i$,
at most one pair $(i,p)$ is mapped to any given edge of
$E_s^\star$. By \Cref{claim:const-overlap}, such pairs occur
at no more than $5$ levels for each edge. Therefore,
\[
    |E_s|\le \sum_{i=0}^L |E_s^i|\le 5|E_s^\star|.
\]
Summing over all sources $s\in P$ gives $|E|\le 5|E^\star|$.
\end{proof}

\subsection{Logarithmic Bicriteria Approximation.}

\begin{lemma}\label{lem:log-approx}
Let $(P,\sfd)$ be a metric space of $n$ points and let $\alpha\ge 1$. Let $G=(P,E)$ be the graph returned by \Cref{alg:log-navigable-graph}. Then $G$ is an $(\alpha,\beta')$-bicriteria $8(\ln n+1)$-approximation, where
\[
    \beta'=(1+\delta)(\alpha+2)+2=\alpha+5+\frac2\alpha.
\]
\end{lemma}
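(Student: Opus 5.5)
Fix a sparsest $\beta'$-navigable graph $G^\star=(P,E^\star)$ and a source $s$, with out-edges $E_s^\star$. As in \Cref{lem:constant-approx}, we have $|E_s|\le\sum_{i}|E_s^i|$. The plan is to bound each $|E_s^i|$ by $O(\ln n)$ times a local quantity, and then charge that quantity to edges of $E_s^\star$ with bounded overlap across levels. For each level $i$ let $\mathcal{G}_i^s=\{(C,p)\in\mathcal{G}_i: p\in A_i(s)\}$ and $U_i^s=\bigcup_{(C,p)\in\mathcal{G}_i^s}C$. Then $|E_s^i|\le|\mathcal{G}_i^s|$.

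\textbf{Local set-cover bound.} The greedy run restricted to the ground set $U_i^s$ is still a greedy-like process. Each time a center $p\in\mathcal{G}_i^s$ is chosen, it covers the maximum number of uncovered points globally. In particular, it covers at least as many uncovered points of $U_i^s$ as any radius-$\rho_i$ ball covers among the still-uncovered points of $U_i^s$. The key point is that every chosen set in $\mathcal{G}_i^s$ contributes at least one fresh element of $U_i^s$, namely its whole cluster $C$. Hence the standard harmonic charging argument applies: give each element of $U_i^s$ price $1/|C|$ when covered. Any ball $B$ then receives total price at most $H_{|B\cap U_i^s|}\le \ln n+1$, by the usual argument that when $k$ elements of $B$ remain uncovered, the greedy choice covers at least $k$. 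So $|\mathcal{G}_i^s|\le(\ln n+1)\cdot \mathrm{OPT}_i$, where $\mathrm{OPT}_i$ is the minimum number of radius-$\rho_i$ balls (centered anywhere in $P$) covering $U_i^s$. Interleaving with greedy steps outside $\mathcal{G}_i^s$ only removes elements, which does not hurt this argument.

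\textbf{Bounding $\mathrm{OPT}_i$ by $E_s^\star$.} Every $t\in U_i^s$ satisfies $\sfd(s,t)\in(r_i-2\rho_i,\,r_{i+1}+2\rho_i]$, because $t$ is within $\rho_i$ of a center in $A_i(s)$. By $\beta'$-navigability, some $(s,x)\in E_s^\star$ has $\sfd(x,t)<\sfd(s,t)/\beta'\le(r_{i+1}+2\rho_i)/\beta'$. With $\beta'=(1+\delta)(\alpha+2)+2$ we have $(r_{i+1}+2\rho_i)/\beta'=\rho_i\,((1+\delta)(\alpha+2)+2)/\beta'=\rho_i$. Thus the balls $B(x,\rho_i)$ over those $x$ used by some $t\in U_i^s$ cover $U_i^s$, and $\mathrm{OPT}_i\le|X_i|$, where $X_i$ is the set of such edges. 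Checking this exact constant is the delicate step, and I expect it to be where $\beta'$ is forced.

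\textbf{Overlap.} Each edge $(s,x)\in X_i$ satisfies $\sfd(s,x)\in\big((1-1/\beta')(r_i-2\rho_i),\,(1+1/\beta')(r_{i+1}+2\rho_i)\big)$. This is an interval of ratio roughly $(1+\delta)^{O(1)}$ in $r_i$, so each edge lies in $X_i$ for at most a constant number of levels. A computation analogous to \Cref{claim:const-overlap} should give at most $8$, since the annulus is widened by $2\rho_i$ rather than $\rho_i$ and $\beta'$ is smaller. This ratio estimate, bounded by about $(1+1/\alpha)^7$, is the main computational obstacle. Combining the three steps gives $|E_s|\le(\ln n+1)\sum_i|X_i|\le 8(\ln n+1)|E_s^\star|$, and summing over $s$ finishes the proof.
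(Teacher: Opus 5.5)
Your proposal is correct and follows the paper's proof essentially step for step. It uses the same local instance (the clusters whose centers lie in $A_i(s)$), the same harmonic charging with weights $1/|C|$ from the global greedy run (you bound the charge of $B\cap U_i^s$ by $H_{|B\cap U_i^s|}$ where the paper bounds the whole ball by $H_n$, which is equivalent here), the same identity $(r_{i+1}+2\rho_i)/\beta'=\rho_i$, and the same overlap interval. The ratio check you left as ``should give'' closes exactly as in the paper: it equals $\frac{\alpha^2+5\alpha+2}{\alpha^2}\cdot\frac{\beta'+1}{\beta'-1}\le(1+\frac1\alpha)^5(1+\frac1\alpha)^2$, so each edge of $E_s^\star$ is charged at most $8$ times.
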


\begin{proof}
Let $G^\star=(P,E^\star)$ be a sparsest $\beta'$-navigable graph. Fix a source point $s\in P$ and a level $i$. Write $E_s^\star=\{(s,x)\in E^\star\}$ for the outgoing edges of $s$ in $G^\star$. Write $\rho_i=r_i/(\alpha+2)$, and let $\mathcal G_i=\{(C_j,p_j)\}_j$ be the greedy ball cover at level $i$ in the order in which the clusters are produced.

Define
\[
    \mathcal U_{i,s}=\bigcup\{C_j\mid (C_j,p_j)\in\mathcal G_i,\ p_j\in A_i(s)\}.
\]
Let $\mathcal F_{i,s}=\{B(c,\rho_i)\cap\mathcal U_{i,s}\mid c\in P\}$, and let $\mathrm{OPT}_{i,s}$ be the optimum set-cover value for the instance $(\mathcal U_{i,s},\mathcal F_{i,s})$.

\begin{claim}\label{claim:greedy-local-bound}
For every source $s$ and level $i$,
\[
    |E_s^i|\le (\ln n+1)\mathrm{OPT}_{i,s}.
\]
\end{claim}

\begin{proof}[Proof of claim]
Write the output of \Cref{alg:greedy-ball-cover} at level $i$ as
\[
    (C_1,p_1),(C_2,p_2),\ldots,(C_m,p_m),
\]
in the order in which the clusters are produced. Let $U^{(j)}$ be the set of points that remain uncovered immediately before the $j$-th greedy step. Thus
\[
    C_j=U^{(j)}\cap B(p_j,\rho_i),
\]
and the greedy choice of $p_j$ implies that, for every center $c\in P$,
\begin{equation}\label{eq:global-greedy-choice}
    |C_j|=|U^{(j)}\cap B(p_j,\rho_i)|
    \ge |U^{(j)}\cap B(c,\rho_i)|.
\end{equation}
Define 
\(J_{i,s}=\{j\in[m]\mid p_j\in A_i(s)\}.\)
Then $E_s^i$ contains exactly one edge $(s,p_j)$ for each $j\in J_{i,s}$, and
\[
    \mathcal U_{i,s}=\bigcup_{j\in J_{i,s}} C_j.
\]

For every point $v\in C_j$, define its weight by
\[
    y_v=\frac1{|C_j|}.
\]
Thus each greedy step assigns total weight $1$ to the
points newly covered at that step.
For each $j\in J_{i,s}$, the total weight of the points
in $C_j$ is
\[
    \sum_{v\in C_j} y_v=1.
\]
Therefore
\begin{equation}\label{eq:local-charge-objective}
    \sum_{v\in\mathcal U_{i,s}} y_v
    =\sum_{j\in J_{i,s}}\sum_{v\in C_j}y_v
    =|J_{i,s}|=|E_s^i|.
\end{equation}
For this fixed level $i$, the weights $y_v$ are determined
by the greedy Set Cover execution on all of $P$ and do not
depend on the source $s$. We use these same weights to bound
$|E_s^i|$ for every local instance
$(\mathcal U_{i,s},\mathcal F_{i,s})$.

We next bound the total weight in any radius-$\rho_i$ ball:
for every $c\in P$,
\begin{equation}\label{eq:ball-charge-bound}
    \sum_{v\in B(c,\rho_i)} y_v \le H_n \le \ln n+1,
\end{equation}
where $H_n=1+1/2+\cdots+1/n$. Fix such a ball $Q=B(c,\rho_i)$. Consider only the greedy steps that cover at least one point of $Q$. Immediately before such a step $j$, let
\[
    k_j=|Q\cap U^{(j)}|
\]
be the number of still-uncovered points of $Q$, and let
\[
    q_j=|Q\cap C_j|
\]
be the number of points of $Q$ covered at this step. Since $Q$ itself is one of the available radius-$\rho_i$ balls, the candidate center $c$ would cover all $k_j$ currently uncovered points of $Q$. Hence, by \eqref{eq:global-greedy-choice},
\[
    |C_j|\ge k_j.
\]
The total weight of the $q_j$ points of $Q$ covered
at step $j$ is
\[
    \frac{q_j}{|C_j|}\le \frac{q_j}{k_j}.
\]
If the nonzero values of $k_j$ along these steps are denoted by
\[
    k^{(1)} > k^{(2)} > \cdots > k^{(a)} > 0,
\]
and $q^{(h)}=k^{(h)}-k^{(h+1)}$ with $k^{(a+1)}=0$, then
\[
    \sum_{v\in B(c,\rho_i)} y_v\leq \sum_{h=1}^a \frac{q^{(h)}}{k^{(h)}}
    \le \sum_{h=1}^a \sum_{\ell=k^{(h+1)}+1}^{k^{(h)}} \frac1\ell
    = H_{k^{(1)}}
    \le H_n.
\]
This proves \eqref{eq:ball-charge-bound}.

Now let $\mathcal B^\textnormal{opt}$ be an optimal cover of $\mathcal U_{i,s}$ by sets from $\mathcal F_{i,s}$, so $|\mathcal B^\textnormal{opt}|=\mathrm{OPT}_{i,s}$. Each set in $\mathcal B^\textnormal{opt}$ has the form $B(c,\rho_i)\cap\mathcal U_{i,s}$ for some $c\in P$. Since $y_v\ge0$ for every $v\in P$ and
$\mathcal B^\textnormal{opt}$ covers $\mathcal U_{i,s}$,
\begin{align*}
    |E_s^i|
    &=\sum_{v\in\mathcal U_{i,s}} y_v &&\text{by \eqref{eq:local-charge-objective}}\\
    &\le \sum_{B(c,\rho_i)\cap\mathcal U_{i,s}\in\mathcal B^\textnormal{opt}}
        \sum_{v\in B(c,\rho_i)\cap\mathcal U_{i,s}} y_v \\
    &\le \sum_{B(c,\rho_i)\cap\mathcal U_{i,s}\in\mathcal B^\textnormal{opt}}
        \sum_{v\in B(c,\rho_i)} y_v \\
    &\le H_n\cdot |\mathcal B^\textnormal{opt}|
    \le (\ln n+1)\mathrm{OPT}_{i,s},
\end{align*}
where the penultimate inequality uses \eqref{eq:ball-charge-bound}. This proves the claim.
\end{proof}

\begin{claim}\label{claim:nav-to-cover}
If an edge $(s,x)\in E^\star$ is a $\beta'$-navigation from $s$ to a point $v\in\mathcal U_{i,s}$, then $v\in B(x,\rho_i)$.
\end{claim}

\begin{proof}[Proof of claim]
Since $v\in\mathcal U_{i,s}$, there exists a pair $(C,p)\in\mathcal G_i$ such that $v\in C$ and $p\in A_i(s)$. Hence $\sfd(p,v)\le\rho_i$. By triangle inequality,
\[
    \sfd(s,v)\le \sfd(s,p)+\sfd(p,v)
    \le r_{i+1}+2\rho_i
    =r_i\left(1+\delta+\frac{2}{\alpha+2}\right).
\]
If $(s,x)$ is a $\beta'$-navigation from $s$ to $v$, then
\[
    \sfd(x,v)<\frac{\sfd(s,v)}{\beta'}
    \le \frac{r_i\left(1+\delta+\frac{2}{\alpha+2}\right)}{(1+\delta)(\alpha+2)+2}
    =\frac{r_i}{\alpha+2}=\rho_i.
\]
Thus $v\in B(x,\rho_i)$.
\end{proof}

For each level $i$, define
\[
E^\star_{s,i}=\left\{(s,x)\in E_s^\star\;\middle|\;\exists v\in\mathcal U_{i,s}\text{ such that }\sfd(x,v)<\frac{\sfd(s,v)}{\beta'}\right\}.
\]
By the $\beta'$-navigability of $G^\star$ and \Cref{claim:nav-to-cover}, the balls $\{B(x,\rho_i)\cap\mathcal U_{i,s}\mid (s,x)\in E^\star_{s,i}\}$ cover $\mathcal U_{i,s}$. Therefore,
\[
    \mathrm{OPT}_{i,s}\le |E^\star_{s,i}|.
\]

It remains to show that each edge of $E_s^\star$ belongs to only constantly many sets $E^\star_{s,i}$. Suppose $(s,x)\in E^\star_{s,i}$. Then there exists $v\in\mathcal U_{i,s}$ such that $\sfd(x,v)<\sfd(s,v)/\beta'$. Choose a pair $(C,p)\in\mathcal G_i$ with $v\in C$ and $p\in A_i(s)$. Since $\sfd(p,v)\le\rho_i$,
\[
    \sfd(s,v)\le r_{i+1}+2\rho_i
    =r_i\left(1+\delta+\frac2{\alpha+2}\right),
\]
and
\[
    \sfd(s,v)> r_i-2\rho_i
    =r_i\left(1-\frac2{\alpha+2}\right).
\]
Consequently,
\[
    r_i\left(1-\frac2{\alpha+2}\right)\left(1-\frac1{\beta'}\right)
    <\sfd(s,x)
    <r_i\left(1+\delta+\frac2{\alpha+2}\right)\left(1+\frac1{\beta'}\right).
\]
Thus, for a fixed edge $(s,x)$, the relevant values of $r_i$ lie in an interval whose ratio is at most
\begin{align*}
R_{\beta'}
&=\frac{\left(1+\delta+\frac2{\alpha+2}\right)\left(1+\frac1{\beta'}\right)}{\left(1-\frac2{\alpha+2}\right)\left(1-\frac1{\beta'}\right)} \\
&=\frac{\alpha^2+5\alpha+2}{\alpha^2}\cdot
  \frac{\alpha+6+\frac2\alpha}{\alpha+4+\frac2\alpha}\\
&\leq \brac{1 + \frac{1}{\alpha}}^5\cdot \brac{1 + \frac{1}{\alpha}}^2 =\left(1+\frac1\alpha\right)^7.
\end{align*}
Since consecutive values of $r_i$ differ by a factor of $1+1/\alpha$, the edge $(s,x)$ can belong to at most $1+\log_{1+1/\alpha}R_{\beta'}\le 8$ sets $E^\star_{s,i}$. Hence
\[
    \sum_{i=0}^L \mathrm{OPT}_{i,s}
    \le \sum_{i=0}^L |E^\star_{s,i}|
    \le 8|E_s^\star|.
\]
Combining this with \Cref{claim:greedy-local-bound},
\[
    |E_s|\le \sum_{i=0}^L |E_s^i|
    \le (\ln n+1)\sum_{i=0}^L\mathrm{OPT}_{i,s}
    \le 8(\ln n+1)|E_s^\star|.
\]
Summing over all sources $s\in P$ gives $|E|\le 8(\ln n+1)|E^\star|$.
\end{proof}

\subsection{Combining the Two Constructions.}

\begin{proof}[Proof of \Cref{thm:alg}]
Run both \Cref{alg:constant-navigable-graph} and \Cref{alg:log-navigable-graph}, and return the graph with fewer edges. By \Cref{lem:both-correct}, both candidate graphs are $\alpha$-navigable, so the returned graph is also $\alpha$-navigable. If $G_{\mathrm{const}}$ is the graph returned by \Cref{alg:constant-navigable-graph} and $G_{\log}$ is the graph returned by \Cref{alg:log-navigable-graph}, then the final output $G$ satisfies
\[
    |E(G)|\le |E(G_{\mathrm{const}})|
    \quad\text{and}\quad
    |E(G)|\le |E(G_{\log})|.
\]
Therefore \Cref{lem:constant-approx} gives the $(\alpha,2\alpha+8+4/\alpha)$-bicriteria $5$-approximation guarantee, and \Cref{lem:log-approx} gives the $(\alpha,\alpha+5+2/\alpha)$-bicriteria $O(\ln n)$-approximation guarantee. The running time remains $\tilde O(\alpha n^2\log\Delta)$, which is $\tilde O(n^2\log\Delta)$ for constant $\alpha$.
\end{proof}

%% file: lbnd.tex
\section{Hardness Proof.}
In this section we prove \Cref{thm:reduction}. We reduce Set Cover to $(\alpha,2\alpha-\eta)$-bicriteria approximation of $\alpha$-navigable graphs on metric point sets.

\begin{lemma}[\cite{Dinur2013AnalyticalAT}]\label{lem:set-cover-approx}
	Let $(\mathcal{U}, \mathcal{F})$ be a set cover instance with $n$ elements and $m = n^{d}$ sets for some constant $d = O(1)$. It is NP-hard to compute a $\frac{1}{2}\ln n$-approximation of minimum set cover.
\end{lemma}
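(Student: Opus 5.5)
This lemma is a restatement of the Set Cover inapproximability theorem of Dinur and Steurer~\cite{Dinur2013AnalyticalAT}, specialized to the constant $\tfrac12$. The plan is to derive it from their theorem rather than reprove the PCP machinery. Their theorem says that for every constant $\varepsilon>0$, approximating Set Cover within $(1-\varepsilon)\ln n$ is NP-hard, where $n$ is the universe size. Concretely, they give a polynomial-time reduction from an NP-hard gap Label Cover problem to Set Cover instances $(\mathcal U,\mathcal F)$ with $|\mathcal U|=n$. In YES instances there is a cover of size $k$; in NO instances every cover has size more than $(1-\varepsilon)k\ln n$. Taking $\varepsilon=\tfrac12$ gives the gap $\tfrac12\ln n$. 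Any algorithm that outputs a $\tfrac12\ln n$-approximate cover would then distinguish the two cases, since it returns a cover of size at most $\tfrac12 k\ln n$ on YES instances. Hence computing such an approximation is NP-hard.

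The remaining step is the size condition $m=n^{d}$ with $d=O(1)$, which the statement adds to the bare hardness claim. I would check it by tracing parameters in the Dinur--Steurer construction. Start from a Label Cover instance of size $N$ with label set size $\mathrm{poly}(N)$ (or $O(1)$ labels after parallel repetition with constant parameters). The reduction uses a partition system (or an analytical analogue) of polynomial size. This produces a universe of size $n=N^{\Theta(1)}$ and a family of $m=N^{O(1)}$ sets, one per (vertex, label) pair. Because $n$ is polynomially related to $N$ in both directions, we get $m\le n^{d}$ for some constant $d$.

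To get exactly $m=n^{d}$, I would pad $\mathcal F$ with duplicate copies of an existing set. Duplicates change neither the optimum nor the approximation ratio. If even $m\ge n$ is needed, duplicates also take care of that.

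I expect the main obstacle to be this bookkeeping. The hardness is usually stated in terms of the universe size $n$, but one must confirm that the number of sets is polynomial in $n$ and not merely in the input length. If the universe in the construction were only polylogarithmically related to $N$, the bound $m=n^{O(1)}$ would fail. So I would verify from the partition-system parameters that $n\ge N^{c}$ for some constant $c>0$. This holds in their construction because the ground set of each partition system is taken polynomial in the number of label cover vertices. With that verified, the lemma follows directly with $\varepsilon=\tfrac12$.
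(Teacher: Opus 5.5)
Your proposal is correct and matches the paper's treatment: the paper gives no proof of its own and simply cites the Dinur--Steurer $(1-\varepsilon)\ln n$ hardness, which with $\varepsilon=\tfrac12$ and your YES/NO gap argument yields exactly this statement. Your extra check that the number of sets is polynomial in the universe size is a sound verification of a condition the paper leaves implicit. Padding to an exact power $n^{d}$ is unnecessary, since the later reduction only uses $m\le n^{d}$.
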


\begin{proof}[Proof of \Cref{thm:reduction}]
Fix constants $\alpha\ge 1$ and $\eta\in(0,1)$. Let $(\mathcal U,\mathcal F)$ be a hard Set Cover instance from \Cref{lem:set-cover-approx}, with $n$ elements and $m\le n^d$ sets for a constant $d=O(1)$. We may assume that the instance is feasible and that its optimum value $|C^\star|$ is at least $20$: if $|C^\star|<20$, then the optimum can be found by exhaustive search over all subsets of at most $19$ sets, which takes polynomial time because $d$ is a constant.

\paragraph*{Reduction from Set Cover to $(\alpha,2\alpha-\eta)$-Bicriteria Approximation.}
Let
\[
    K=\left\lceil \frac{2(m+n+1)}{\eta}\right\rceil.
\]
We create $K$ disjoint copies of the Set Cover instance, denoted by $(\mathcal U_i,\mathcal F_i)$ for $1\le i\le K$. We represent each set and each element by a distinct vertex, using the same notation for the vertex and the object it represents. For each copy $i$, we also introduce a unique hub point $c_i$, and write
\[
    P_i=\{c_i\}\cup \mathcal F_i\cup \mathcal U_i,
    \qquad
    P=\bigcup_{i=1}^K P_i.
\]
Thus the number of metric points is $N=|P|=K(m+n+1)$.

To define the metric $(P,\sfd)$, we construct an undirected weighted graph $H=(P,F,\mu)$ and let $\sfd(\cdot,\cdot)$ be the shortest-path metric of $H$. The edges and weights in $H$ are defined as follows; see \Cref{fig:lb-metric} for an illustration.
\begin{enumerate}[leftmargin=*]
    \item For each $i\in[K]$, if element $e\in\mathcal U_i$ is contained in set $S\in\mathcal F_i$, add an edge $(S,e)$ with weight $\mu(S,e)=1$.
    \item For each $i\in[K]$, add an edge $(c_i,S)$ with weight $\mu(c_i,S)=1$ for every $S\in\mathcal F_i$.
    \item For any two points $p\in P_i$ and $q\in P_j$ with $i\ne j$, add an edge $(p,q)$ with weight $\mu(p,q)=2\alpha$.
\end{enumerate}

\begin{figure}[t]
\centering
\input{./fig/lower_bound_metric_figure}
\caption{The weighted graph $H$ whose shortest-path metric is $(P,\sfd)$ in the lower-bound construction. Each $P_i$ consists of the hub $c_i$, the set vertices $\mathcal F_i$, and the element vertices $\mathcal U_i$. Solid edges have weight $1$: hub-to-set edges and set-element incidence edges. Dashed edges indicate that every pair of vertices in two different copies is connected by an edge of weight $2\alpha$.}
\label{fig:lb-metric}
\end{figure}

Now run a polynomial-time $(\alpha,2\alpha-\eta)$-bicriteria approximation algorithm on this metric and let $G=(P,E)$ be the resulting $\alpha$-navigable graph. Among all pairs $(s,j)$ with $s\in P_i$ and
$i,j\in[K]$, $i\ne j$, choose one that minimizes the number
of out-neighbors of $s$ in $\mathcal F_j\cup\mathcal U_j$. Denote by
\(
    V_{s,j}=\{p\in \mathcal F_j\cup\mathcal U_j : (s,p)\in E\}
\) the set of these out-neighbors.

\begin{lemma}\label{lem:valid-set-cover}
	The set $V_{s,j}\cap\mathcal F_j$ forms a set cover of $\mathcal U_j\setminus V_{s,j}$.
\end{lemma}
\begin{proof}
	Consider any element $x\in\mathcal U_j\setminus V_{s,j}$. Since $G$ is $\alpha$-navigable, there exists an edge $(s,p)\in E$ such that $\sfd(p,x)<\sfd(s,x)/\alpha$. Since $s$ and $x$ belong to different copies, $\sfd(s,x)=2\alpha$, and hence $\sfd(p,x)<2$. All edge weights in $H$ are at least $1$, and the only points at distance strictly less than $2$ from $x$ are the sets in $\mathcal F_j$ that contain $x$. Moreover, $x\notin V_{s,j}$, so $p\ne x$. Therefore $p\in V_{s,j}\cap\mathcal F_j$ and $p$ contains $x$.
\end{proof}

For every element $v\in\mathcal U_j\cap V_{s,j}$, choose an arbitrary set $S_v\in\mathcal F_j$ containing $v$, which exists because the original Set Cover instance is feasible. Define
\[
    C=(V_{s,j}\cap\mathcal F_j)\cup\{S_v: v\in\mathcal U_j\cap V_{s,j}\}.
\]
By \Cref{lem:valid-set-cover}, $C$ is a valid set cover for the $j$-th copy, and therefore for the original instance, with
\(
    |C|\le |V_{s,j}|.
\)

\paragraph*{Approximation Analysis.}
Let $C^\star$ be an optimal set cover solution for the original instance. For each index $i$, let $C^\star_i$ be a copy of $C^\star$ but on the set system $(\mathcal{U}_i, \mathcal{F}_i)$. We construct a $(2\alpha-\eta)$-navigable graph $G^\star=(P,E^\star)$ as follows.
\begin{itemize}[leftmargin=*]
    \item \textbf{Inter-copy edges:}
For every pair of distinct indices $i,j\in[K]$ and every
$s\in P_i$, add edges from $s$ to $c_j$ and to every
point in $C_j^\star$.
    \item \textbf{Intra-copy edges:} For every $i\in[K]$ and every pair of distinct points $s,t\in P_i$, add the directed edge $(s,t)$.
\end{itemize}

We first verify that $G^\star$ is $(2\alpha-\eta)$-navigable. If $s$ and $t$ belong to the same copy, then the direct edge $(s,t)$ gives distance $0$ to the target. Otherwise, suppose $s\in P_i$ and $t\in P_j$ for $j\ne i$. Then $\sfd(s,t)=2\alpha$. If $t=c_j$ or $t\in C_j^\star$, the direct edge to $t$ again gives distance $0$. If $t\in\mathcal F_j\setminus C^\star_j$, then the edge $(s,c_j)$ satisfies
\[
    \sfd(c_j,t)=1<\frac{2\alpha}{2\alpha-\eta}=\frac{\sfd(s,t)}{2\alpha-\eta}.
\]
Finally, if $t\in\mathcal U_j$, then some set $S\in C^\star_j$ contains $t$, and the edge $(s,S)$ satisfies
\[
    \sfd(S,t)=1<\frac{2\alpha}{2\alpha-\eta}=\frac{\sfd(s,t)}{2\alpha-\eta}.
\]
Thus $G^\star$ is $(2\alpha-\eta)$-navigable.

The total number of edges in $G^\star$ is
\[
    |E^\star|
    = K(K-1)(m+n+1)(|C^\star|+1)
      + K(m+n+1)(m+n).
\]
Suppose $G=(P,E)$ is an $(\alpha,2\alpha-\eta)$-bicriteria $\rho$-approximation to the sparsest $\alpha$-navigable graph, where
\(
    \rho=\frac{1}{10d}\ln N.
\)
Since $G^\star$ is feasible for the comparison parameter $2\alpha-\eta$, we have $|E|\le \rho |E^\star|$. Averaging over the $K(K-1)(m+n+1)$ pairs $(s,j)$ with
$s\in P_i$ and $i,j\in[K]$, $i\ne j$, shows that the
chosen pair $(s,j)$ satisfies
\begin{align*}
    |C|\le |V_{s,j}|
    &\le \frac{|E|}{K(K-1)(m+n+1)} \\
    &\le \rho\left(|C^\star|+1+\frac{m+n}{K-1}\right) \\
    &\le \rho\left(|C^\star|+1+\eta\right).
\end{align*}
Here the last inequality follows from the choice of $K$.

Since $m\le n^d$ and $\eta$ is a fixed constant, for all sufficiently large $n$ we have
\[
    N=K(m+n+1)\le \frac{100 n^{2d}}{\eta}.
\]
Therefore
\[
    \rho
    =\frac{1}{10d}\ln N
    \le \frac{1}{10d}\ln\left(\frac{100 n^{2d}}{\eta}\right)
    <0.21\ln n
\]
for all sufficiently large $n$. Also, by our assumption $|C^\star|\ge20$ and by $\eta<1$,
\[
    |C^\star|+1+\eta\le |C^\star|+2\le 1.1|C^\star|.
\]
Combining the above inequalities, we get
\[
    |C|<0.21\ln n\cdot 1.1|C^\star|<\frac12\ln n\cdot |C^\star|,
\]
contradicting \Cref{lem:set-cover-approx}. Thus, for $c_\eta=1/(10d)$, no polynomial-time $(\alpha,2\alpha-\eta)$-bicriteria $c_\eta\ln N$-approximation exists unless $\mathrm{P}=\mathrm{NP}$.
\end{proof}

%% file: fig/lower_bound_metric_figure.tex
\begin{tikzpicture}[
    x=1cm,y=1cm,
    hub/.style={circle,draw,thick,fill=black!8,minimum size=7mm,inner sep=0pt},
    setv/.style={circle,draw,thick,fill=blue!8,minimum size=6mm,inner sep=0pt},
    elemv/.style={circle,draw,thick,fill=green!8,minimum size=6mm,inner sep=0pt},
    localedge/.style={line width=0.45pt},
    crossedge/.style={line width=0.45pt,dashed,gray!70},
    copybox/.style={rounded corners,draw=gray!60,dashed,line width=0.45pt},
    every node/.style={font=\small}
]
    \node[hub] (ci) at (0,2.65) {$c_i$};
    \node[setv] (Fi1) at (-1.45,1.25) {$S^i_1$};
    \node[setv] (Fi2) at (0,1.25) {$S^i_2$};
    \node[setv] (Fi3) at (1.45,1.25) {$S^i_3$};
    \node[elemv] (Ui1) at (-1.85,0) {$e^i_1$};
    \node[elemv] (Ui2) at (-0.6,0) {$e^i_2$};
    \node[elemv] (Ui3) at (0.6,0) {$e^i_3$};
    \node[elemv] (Ui4) at (1.85,0) {$e^i_4$};

    \foreach \v in {Fi1,Fi2,Fi3} {
        \draw[localedge] (ci) -- (\v);
    }
    \draw[localedge] (Fi1) -- (Ui1);
    \draw[localedge] (Fi1) -- (Ui2);
    \draw[localedge] (Fi2) -- (Ui2);
    \draw[localedge] (Fi2) -- (Ui3);
    \draw[localedge] (Fi3) -- (Ui3);
    \draw[localedge] (Fi3) -- (Ui4);

    \draw[copybox] (-2.35,-0.65) rectangle (2.35,3.08);
    \node[font=\scriptsize] at (0,3.5) {copy $i$: $\{c_i\}\cup\mathcal F_i\cup\mathcal U_i$};
    \node[font=\scriptsize,anchor=east] at (-2.55,2.65) {hub};
    \node[font=\scriptsize,anchor=east] at (-2.55,1.25) {sets $\mathcal F_i$};
    \node[font=\scriptsize,anchor=east] at (-2.55,0) {elements $\mathcal U_i$};

    \node[hub] (cj) at (6.6,2.65) {$c_j$};
    \node[setv] (Fj1) at (5.15,1.25) {$S^j_1$};
    \node[setv] (Fj2) at (6.6,1.25) {$S^j_2$};
    \node[setv] (Fj3) at (8.05,1.25) {$S^j_3$};
    \node[elemv] (Uj1) at (4.75,0) {$e^j_1$};
    \node[elemv] (Uj2) at (6.0,0) {$e^j_2$};
    \node[elemv] (Uj3) at (7.2,0) {$e^j_3$};
    \node[elemv] (Uj4) at (8.45,0) {$e^j_4$};

    \foreach \v in {Fj1,Fj2,Fj3} {
        \draw[localedge] (cj) -- (\v);
    }
    \draw[localedge] (Fj1) -- (Uj1);
    \draw[localedge] (Fj1) -- (Uj2);
    \draw[localedge] (Fj2) -- (Uj2);
    \draw[localedge] (Fj2) -- (Uj3);
    \draw[localedge] (Fj3) -- (Uj3);
    \draw[localedge] (Fj3) -- (Uj4);

    \draw[copybox] (4.25,-0.65) rectangle (8.95,3.08);
    \node[font=\scriptsize] at (6.6,3.5) {copy $j$: $\{c_j\}\cup\mathcal F_j\cup\mathcal U_j$};

    \node[fill=white,inner sep=1pt,font=\scriptsize] at (-0.78,1.95) {$1$};
    \node[fill=white,inner sep=1pt,font=\scriptsize] at (0.95,0.62) {$1$};
    \node[fill=white,inner sep=1pt,font=\scriptsize] at (5.82,1.95) {$1$};
    \node[fill=white,inner sep=1pt,font=\scriptsize] at (7.55,0.62) {$1$};

    \draw[crossedge] (ci) to[bend left=9] node[above,fill=white,inner sep=1pt,font=\scriptsize] {$2\alpha$} (cj);
    \draw[crossedge] (Fi2) -- node[above,fill=white,inner sep=1pt,font=\scriptsize] {$2\alpha$} (Fj2);
    \draw[crossedge] (Ui3) to[bend right=7] node[below,fill=white,inner sep=1pt,font=\scriptsize] {$2\alpha$} (Uj2);
    \draw[crossedge] (Fi1) to[bend left=5] (Uj4);
    \draw[crossedge] (Ui1) to[bend right=6] (Fj3);

    \node[align=center,gray!80,font=\scriptsize] at (3.3,-1.3)
        {for every $p\in P_i$, $q\in P_j$ \\
         with $i\ne j$: $\mu(p,q)=2\alpha$};
\end{tikzpicture}

%% file: sections/ai_disclosure.tex
\section*{Disclosure of AI Use.}

\begin{flushleft}
\renewcommand{\arraystretch}{1.3}
\begin{tabular}{@{}l@{\qquad}c@{}}
\textbf{Activity} & \textbf{AI used} \\
Asking the research question & No \\
Coming up with the approach & No \\
Proof development & No \\
Writing and exposition & Yes \\
Checking for bugs & Yes \\
Other supporting tasks & No \\
\end{tabular}
\end{flushleft}

\begin{sloppypar}
\paragraph{Details.}
\begin{enumerate}
\item \emph{Asking the research question.}
The authors formulated the research question.

\item \emph{Coming up with the approach.}
The authors developed the overall conceptual approach.

\item \emph{Proof development.}
The authors developed all the main technical ideas underlying the proofs.

\item \emph{Writing and exposition.}
The authors wrote an initial draft of the manuscript and subsequently
used ChatGPT to help refine its exposition and presentation.

\item \emph{Checking for bugs.}
The authors used ChatGPT to check technical arguments. It helped
identify a bug in the analysis in an earlier version of Section~2.5
(Logarithmic Bicriteria Approximation).

\item \emph{Other supporting tasks.}
We report no additional substantial AI contributions to supporting tasks.
\end{enumerate}

The authors reviewed and finalized all mathematical claims, proofs,
and text, and take full responsibility for the correctness,
originality, exposition, and attribution of the final paper.
\end{sloppypar}

%% file: sections/acknowledgment.tex
\section*{Acknowledgment.}
Hongyi Chen and Tianyi Zhang are supported by the Fundamental and Interdisciplinary Disciplines Breakthrough Plan of the Ministry of Education of China (No. JYB2025XDXM118) and the ``111 Center'' (No. B26023).